\let\cl@part\relax \makeatother %cleverref bug
\pgfplotsset{compat = newest}
\newtheorem{propy}{Property}
\newtheorem{rem}{Remark}
\newtheorem{prop}{Proposition}
\newtheorem{cor}{Corollary}
\newtheorem{thm}{Theorem}
\newtheorem{defn}{Definition}
\theoremstyle{definition}
\newtheorem{exam}{Example}
\newcommand\tran{\mkern-2mu\raise1.25ex\hbox{$\scriptscriptstyle\top\hspace{0.5mm}$}\mkern-3.5mu}
\newcommand{\R}{\mathbb{R}}
\newcommand{\N}{\mathbb{N}}
\newcommand{\ND}{\mathcal{N}}
\newcommand{\D}{\mathcal{D}}
\newcommand{\X}{\mathcal{X}}
\newcommand{\bm}[1]{{\boldsymbol{#1}}}
\newcommand{\Verts}[1]{{\left\Vert #1 \right\Vert}}
\DeclareMathOperator{\diag}{diag}
\DeclareMathOperator{\var}{var}
\DeclareMathOperator{\mean}{\mu}
\DeclareMathOperator{\Var}{\Sigma}
\DeclareMathOperator{\Mean}{\bm\mu}
\DeclareMathOperator{\Prob}{P}
\DeclareMathOperator{\ev}{E}
\DeclareMathOperator{\mspe}{MSPE}
\DeclareMathOperator{\geig}{\bar{\lambda}}
\DeclareMathOperator{\tr}{tr}
\DeclareRobustCommand{\rvdots}{%
  \vbox{
    \baselineskip4\p@\lineskiplimit\z@
    \kern-\p@
    \hbox{.}\hbox{.}\hbox{.}
  }}
\newcommand{\GP}{\mathcal{GP}}
\newcommand{\vu}{\bm u}
\newcommand{\inputu}{\bm u}
\newcommand{\uk}{{\bm u}_t}
\newcommand{\x}{\bm x}
\newcommand{\xk}{\bm{x}_t}
\newcommand{\xkp}{\bm{x}_{t+1}}
\newcommand{\y}{\bm y}
\newcommand{\yk}{\bm{y}_t}
\newcommand{\ykp}{\bm{y}_{t+1}}
\newcommand{\z}{\bm\zeta}
\newcommand{\zk}{\bm{\zeta}_t}
\newcommand{\f}{\bm{f}}
\newcommand{\dyn}{\bm{f}}
\newcommand{\m}{\bm{m}}
\newcommand{\h}{\bm{h}}
\newcommand{\vxi}{\bm{\xi}}
\newcommand{\ubar}[1]{\underline{#1}}
\crefname{propy}{Property}{Properties}
\crefname{exam}{Example}{Examples}
\crefname{rem}{Remark}{Remarks}
\crefname{assum}{Assumption}{Assumptions}
\crefname{prop}{Proposition}{Propositions}
\crefname{cor}{Corollary}{Corollaries}
\crefname{lem}{Lemma}{Lemmas}
\crefname{thm}{Theorem}{Theorems}
\crefname{defn}{Definition}{Definitions}
\crefname{figure}{Fig.}{Fig.}
\Crefname{figure}{Figure}{Figures}
\crefname{equation}{}{}
\Crefname{equation}{Equation}{Equations}
\pgfplotsset{
    xticks from table/.code 2 args={%
        \pgfplotstablegetrowsof{#1}
        \pgfmathsetmacro\pgfplotstabletotalrows{\pgfplotsretval-1}
        \edef\ticklist{}
            \pgfplotstableforeachcolumnelement{#2}\of#1\as\cell{%
                \edef\ticklist{\ticklist\cell\ifnum\pgfplotstablerow<\pgfplotstabletotalrows,\fi}
            }
        \pgfplotsset{xtick=\ticklist}
    },
}
\begin{document}
%
% paper title
% Titles are generally capitalized except for words such as a, an, and, as,
% at, but, by, for, in, nor, of, on, or, the, to and up, which are usually
% not capitalized unless they are the first or last word of the title.
% Linebreaks \\ can be used within to get better formatting as desired.
% Do not put math or special symbols in the title.
\title{Prediction with Approximated Gaussian Process\\Dynamical Models}

\author{Thomas Beckers and Sandra Hirche% <-this % stops a space
\thanks{T. Beckers is with Department of Electrical and Systems Engineering, University of Pennsylvania, Philadelphia, USA, {\tt\small tbeckers@seas.upenn.edu}. S. Hirche is with the Chair of Information-oriented Control (ITR), Department of Electrical and Computer Engineering, Technical University of Munich, 80333 Munich, Germany, {\tt\small hirche@tum.de}}}

% The paper headers
%\markboth{Journal of \LaTeX\ Class Files,~Vol.~14, No.~8, August~2015}%
%{Shell \MakeLowercase{\textit{et al.}}: Bare Demo of IEEEtran.cls for IEEE Journals}
% The only time the second header will appear is for the odd numbered pages
% after the title page when using the twoside option.
% 

% make the title area
\maketitle

% As a general rule, do not put math, special symbols or citations
% in the abstract or keywords.
\begin{abstract}
The modeling and simulation of dynamical systems is a necessary step for many control approaches. Using classical, parameter-based techniques for modeling of modern systems, e.g., soft robotics or human-robot interaction, is often challenging or even infeasible due to the complexity of the system dynamics. In contrast, data-driven approaches need only a minimum of prior knowledge and scale with the complexity of the system. In particular, Gaussian process dynamical models (GPDMs) provide very promising results for the modeling of complex dynamics. However, the control properties of these GP models are just sparsely researched, which leads to a "blackbox" treatment in modeling and control scenarios. In addition, the sampling of GPDMs for prediction purpose respecting their non-parametric nature results in non-Markovian dynamics making the theoretical analysis challenging. In this article, we present approximated GPDMs which are Markov and analyze their control theoretical properties. Among others, the approximated error is analyzed and conditions for boundedness of the trajectories are provided. The outcomes are illustrated with numerical examples that show the power of the approximated models while the the computational time is significantly reduced.
\end{abstract}

% Note that keywords are not normally used for peerreview papers.
\begin{IEEEkeywords}
Probabilistic models, nonparametric methods, Gaussian processes, stochastic modeling, probabilistic simulation, learning systems, data-based control.
\end{IEEEkeywords}

% For peer review papers, you can put extra information on the cover
% page as needed:
% \ifCLASSOPTIONpeerreview
% \begin{center} \bfseries EDICS Category: 3-BBND \end{center}
% \fi
%
% For peerreview papers, this IEEEtran command inserts a page break and
% creates the second title. It will be ignored for other modes.
\IEEEpeerreviewmaketitle

\section{Introduction}
\IEEEPARstart{M}{odeling} of dynamical systems plays a very import role in the area of control theory. The goal is the derivation of a mathematical model which is based on generated input data and the corresponding output data of the plant. The model is necessary for any model-based control design, such as model predictive control. Besides, a model is required for simulations to evaluate the quality of the control designs and to improve the understanding of the system. To achieve a dynamical model, the output of the model is feedbacked to the model itself. A special class of dynamical models is given by \emph{simulation models}, which do not rely on any data from the plant during the prediciton~\cite{nelles2013nonlinear}. Therefore, these models are suitable to perform predictions independent of the plant for not only simulations but also in control scenarios such as model predictive control. Classical system identification deals with parametric models. If the system contains nonlinearities, there exist various identification techniques, which mostly depends on the structure of the nonlinear elements. For these approaches, a suitable model structure must be selected a priori to achieve useful results. However, there exists a large class of systems which can not be accurately described by parametric models. Especially, for complex systems such as human motion dynamics~\cite{wang2008gaussian,sigal2012loose}, prediction of climate effects~\cite{petelin2013evolving,hassouneh2012non} or structural dynamics~\cite{soize2005comprehensive,lydia2014comprehensive}, non-parametric techniques appear to be more promising.\\
Within the past two decades, Gaussian processes (GPs) have been developed as powerful function regressors. A GP connects every point of a continuous input space with a normally distributed random variable. Any finite group of those infinitely many random variables follows a multivariate Gaussian distribution. The result is a powerful tool for nonlinear function regression without the need of much prior knowledge~\cite{rasmussen2006gaussian}. In contrast to most of the other techniques, GP modeling provides not only a mean function but also a measure for the uncertainty of the prediction. The output is a Gaussian distributed variable which is fully described by the mean and the variance. There are several possibilities to use a Gaussian process for dynamic system modeling. A frequent approach is the state space model which is in general a very efficient model structure. Gaussian process dynamical models (GPDMs) have recently also become a versatile tool in system identification because of their beneficial properties such as the bias variance trade-off and the strong connection to Bayesian mathematics, see~\cite{frigola2014variational}. The Gaussian process state space model (GP-SSM) uses GPs for modeling dynamical systems with state space models, see~\cite{kocijan2005dynamic}, where each state is described by an own GP. The function between the states and the system's outputs are modeled by another GP or a parametric structure. Alternatives are given by nonlinear identification models such as NFIR~\cite{ackermann2011nonlinear}, NARX~\cite{kocijan2005dynamic} or nonlinear output error (NOE) models~\cite{kocijan2011output}. In comparison to the other models, the NOE has the advantage of being a simulation model such as the GP-SSM.
Although the application of Gaussian process dynamical models increases in control theory, e.g., for adaptive control and model predictive control~\cite{rogers2011adaptive,kocijan2004gaussian,hewing2018cautious}, the theoretical properties of these GPDMs are only sparsely researched. However, the theoretical properties are crucial for further investigations in robustness, stability and performance of control approaches based on GPDMs~\cite{kocijan2016modelling,avzman2008non}.\\
In many works where GPs are considered as dynamical model, only the mean function of the process is employed, for instance in~\cite{wang2005gaussian} and~\cite{chowdhary2013bayesian}. This is mainly because a GPDM is often used for replacing a deterministic model in already existing model-based control approaches. In~\cite{beckers:ecc2016} some basic theoretical properties for deterministic GP-SSMs are derived. However, GPDMs contain a much richer description of the underlying dynamics but also the uncertainty about the model itself when the full probabilistic representation is considered. In~\cite{medina2015synthesizing,beckers2019automatica} control laws are derived which explicitly take the uncertainty of GPDMs into account but without investigation of the control properties of the models. In order to ensure the applicability of GPDMs, classical control theory properties are required, see~\cite{kocijan2016modelling} and~\cite{avzman2008non}. Such basic properties of a dynamical system are, among others, the existence of boundedness conditions. In~\cite{beckers:cdc2016} some basic boundedness properties for simplified probabilistic GP-SSMs are presented. However, it turns out that the analysis of GPDMs is challenging as its usage in simulations requires the sampling of an infinite-dimensional object which is not possible without further simplifications, e.g., discrete sampling with interpolation. To overcome this issue, the authors of~\cite{frigola2013bayesian} propose to marginalize out the transition functions to respect the nonparametric nature of the model using Particle Markov Chain Monte Carlo (PMCMC). However, the resulting non-Markovian dynamics is strongly undesired in control systems as it leads to theoretical and practical issues: the analyse tools in control are mostly suited for Markovian systems such as the Lyapunov stability and the dependencies across time results in computational time and memory issues for long-time simulations. Even though effective sampling of GPDMs has recently gained attention in the machine learning community, e.g.~\cite{wilson2020efficiently,cheng2017variational}, the control related implications are still open.\\
\textbf{Contributions:} The contribution of this article is the introduction of approximated GP-SSM and GP-NOE models to recover the Markovian property. For this purpose, the set of past states for the prediction of the next state ahead is shortened to a finite subset. We show that control relevant properties such as boundedness for the open- and closed-loop are preserved in the transition from the true model to the approximated, Markovian model. In addition, it is guaranteed that the predicted uncertainty of the approximated model overestimated the true uncertainty. Furthermore, upper bounds for the approximation error expressed by the means square prediction error and the Kullback–Leibler divergence are presented. In two case studies, we discuss the behavior of the approximated models and highlight their benefits.\\
\textbf{Notation:} Vectors and vector-valued functions are denoted with bold characters~$\bm{v}$. The notation~$[\bm{a};\bm{b}]$ is used for~$[\bm{a}^\top,\bm{b}^\top]^\top$ and~$\x_{1:n}$ denotes~$[\x_1,\ldots,\x_n]$. Capital letters~$A$ describes matrices. The matrix~$I$ is the identity matrix in appropriate dimension. The expression~$\mathcal{N}(\mu,\Sigma)$ describes a normal distribution with mean~$\mu$ and covariance~$\Sigma$. $N_+$ denotes the positive natural numbers.\\
The remainder of the article is structured as follows. In~\cref{sec:back}, the background of GPDMs as well as their sampling procedure are introduced.~\Cref{sec:approx} presents the novel approximated GPDMs and the approximation error. The boundedness of the presented models is analyzed in~\cref{sec:bound}, followed by a discussion. Finally, two case studies demonstrate the applicability.
%%%%%%%%%%%%%%%%%%%%%%%%%%%%%%%%%%%%%%%%%%%%%%%%%%%%%%%%%%%%%%%%%%%%%%%%%%%%%%%%%%%%%%%%%%%%%%%%%%%%
%%%%%%%%%%%%%%%%%%%%%%%%%%%%%%%%%%%%%%%%%%%%%%%%%%%%%%%%%%%%%%%%%%%%%%%%%%%%%%%%%%%%%%%%%%%%%%%%%%%%
\section{Preliminaries and Definitions}
\label{sec:back}
In this article, we focus on Gaussian process based dynamic models. Thus, we start with a brief introduction to GPs as they are the central part of the model.
\subsection{Gaussian Process Models}
Let~$(\Omega, \mathcal{F},P)$ be a probability space with the sample space~$\Omega=\R^n,n\in\N$, the corresponding~$\sigma$-algebra~$\mathcal{F}$ and the probability measure~$P$. Consider a vector-valued, unknown function~$\y=\f(\bm{z})$ with~$\f\colon \R^n\to \R^{n_f}$ and~$\y\in\R^{n_f}$. The measurement~$\tilde{\y}\in\R^{n_f}$ of the function is corrupted by Gaussian noise~$\bm\eta\in\R^{n_f}$, i.e.,
\begin{align}
	\tilde{\y}&=\f(\bm{z})+\bm\eta,\quad\bm\eta\sim\mathcal{N}(\bm 0,\Sigma_n)
\end{align}
with the positive definite matrix~$\Sigma_n=\diag (\sigma_{1}^2,\ldots,\sigma_{n_f}^2)$. To generate the training data, the function is evaluated at~${n_\D}$ input values~$\{\bm{z}^{\{j\}}\}_{j=1}^{n_\D}$. Together with the resulting measurements~$\{\tilde{\y}^{\{j\}}\}_{j=1}^{n_\D}$, the whole training data set is described by~$\D=\{X,Y\}$ with the input training matrix~$X=[\bm{z}^{\{1\}},\bm{z}^{\{2\}},\ldots,\bm{z}^{\{{n_\D}\}}]\in\R^{n\times {n_\D}}$ and the output training matrix~$Y=[\tilde{\y}^{\{1\}},\tilde{\y}^{\{2\}},\ldots,\tilde{\y}^{\{{n_\D}\}}]^\top\in\R^{{n_\D}\times {n_f}}$. Now, the objective is to predict the output of the function~$\f(\bm{z}^*)$ at a test input~$\bm{z}^*\in\R^n$. The underlying assumption of GP modeling is, that the data can be represented as a sample of a multivariate Gaussian distribution using a kernel function $k$. The joint distribution of the~$i$-th component of~$\f(\bm{z}^*)$ is\footnote{For notational convenience, we simplify~$K(X,X)$ to~$K$}
\begin{align}
	\begin{bmatrix}\! Y_{:,i} \\ {f}_i(\bm{z}^*) \!\end{bmatrix}\!\sim\! \mathcal{N}\!\left(\!\begin{bmatrix}\m(X) \\ {m}(\bm{z}^*)\end{bmatrix}\!\!,\!\! \begin{bmatrix} K(X,X)+\sigma_i^2 I & \bm{k}(\bm{z}^*\!,X)\\ \bm{k}(\bm{z}^*\!,X)\tran & k(\bm{z}^*\!,\bm{z}^*) \end{bmatrix}\!\right)\label{sec2:for:joint_dist}
\end{align} 
with the kernel~$k\colon\R^n\times\R^n\to\R$ as a measure of the correlation of two points~$(\bm x,\bm x^\prime)$. The mean function is given by a continuous function~$m\colon\R^{n}\to\R$ and the vector of mean functions~$\bm{m}\colon\R^{n\times {n_\D}}\to\R^{n_\D}$ by~$\bm{m}(X)=[m(X_{:,1});\ldots;m(X_{:,{n_\D}})]$. The kernel function is the central part of the kernel trick, which transforms the data to a higher dimensional feature space~$\Upsilon$, see~\cref{fig:kernel}, without knowing the actual transformation~$\phi\colon R^{n_\D}\to\Upsilon$ since~$k(x,x')=\left\langle \phi(x),\phi(x')\right\rangle~$. Then, a linear regression is performed in the feature space and the output is transformed back.
\begin{figure}[ht]
	\begin{center}
		\tikzsetnextfilename{kernel_trick}
		\vspace{0.15cm}
		\begin{tikzpicture}[scale=1,>=latex]
	\def\so{4}
	\def\rwi{2}
	\def\rhe{1.3}
	\def\rx{1.5}
	\def\ry{1.6}
	\def\centerarc[#1](#2)(#3:#4:#5)% Syntax: [draw options] (center) (initial angle:final angle:radius)
    { \draw[#1] ($(#2)+({#5*cos(#3)},{#5*sin(#3)})$) arc (#3:#4:#5); }
	\coordinate (y) at (0,2);
    \coordinate (x) at (2,0);
    \draw[<->] (y) node[above] {} -- (0,0) --  (x) node[right] {};
    \foreach \i in {0,...,3}
{
    \draw ({0.4*sin(\i/2*pi r)+1.2},{0.4*cos(\i/2*pi r)+1.2}) circle (2pt); 
}
    \foreach \i in {1,...,10}
{
	\node [draw,rectangle,minimum width=2pt,minimum height=2pt,inner sep=0pt,fill](c\i) at ({0.8*sin(\i/5*pi r)+1.2},{0.8*cos(\i/5*pi r)+1.2}){};
}
%%%%%%%%%%%%%%%%
	\coordinate (y1) at (\so,2);
    \coordinate (x1) at (2+\so,0);
    \coordinate (z1) at (1.5+\so,1.5);
    \draw[<->] (y1) node[] {} -- (\so,0) --  (x1) node[] {};
    \draw[->]  (\so,0) --  (z1) node[] {};
    \draw[fill=white] (\so+1.1,1.1) circle (1cm); 
    \foreach \i in {27,...,31}
	{
		\node [draw,rectangle,minimum width=2pt,minimum height=2pt,inner sep=0pt,fill](c\i) at ({2.7*sin(\i/23*pi r)+3+\so},{2.7*cos(\i/23*pi r)+3}){};
	}
	\draw (\so+1.74,1.7) circle (2pt); 
	\draw (\so+1.54,1.85) circle (2pt); 
	\centerarc[red,thick](3+\so,3)(204:246:2.3)
	\centerarc[red,thick,dashed](\so,0)(24:66:2.3)
	%\draw[red,rotate around={-44:(\rx+\so,\ry)}] (\rx+\so-\rwi/2,\ry-\rhe/2) rectangle (\rx+\so+\rwi/2,\ry+\rhe/2);
	\draw [->, thick] (2.5,1) to [out=40,in=150] node[auto] {$\phi$} (\so-0.5,1);
	\node  at (0.4,2) {$\R^n$};
	\node  at (4.2,2) {$\Upsilon$};
\end{tikzpicture}
		\vspace{0cm}\caption{The kernel trick transforms the data in a higher dimensional feature space where a linear regression is performed.}\vspace{-0.0cm}
		\label{fig:kernel}
	\end{center}
\end{figure}
The function~$K\colon\R^{n\times {n_\D}}\times \R^{n\times {n_\D}}\to\R^{{n_\D}\times {n_\D}}$ is called the Gram matrix~$K_{j,l}= k(X_{:, l},X_{:, j})$ with~$j,l\in\lbrace 1,\ldots,{n_\D}\rbrace$. Each element of the matrix represents the covariance between two elements of the training data~$X$. The vector-valued function~$\bm{k}\colon\R^n\times \R^{n\times {n_\D}}\to\R^{n_\D}$ calculates the covariance between the test input~$\bm{z}^*$ and the input training data~$X$  
\begin{align}
	\bm{k}(\bm{z}^*,X)\text{ with }k_j = k(\bm{z}^*,X_{:, j})
\end{align}
for all~$j\in\lbrace 1,\ldots,{n_\D}\rbrace$. The covariance function depends on a set of hyperparameters~$\Phi=\{\varphi_1,\ldots,\varphi_{n_h}\}$ whose number~$n_h\in\N$ and domain of parameters depend on the employed function. A comparison of the characteristics of the different covariance functions can be found in~\cite{bishop2006pattern}. The prediction of each component of~$\bm f(\bm{z}^*)$ is derived from the joint distribution~\cref{sec2:for:joint_dist} and, therefore, it is a Gaussian distributed variable. The conditional probability distribution for the~$i$-th element of the output is defined by the mean and the variance
\begin{align}
	\mean_i(\f\vert \bm{z}^*,\mathcal D)=&{m}(\bm{z}^*)+\bm{k}(\bm{z}^*,X)\tran {(K+\sigma_i^2 I)}^{-1}\notag\\
	&(Y_{:,i}-\m(X))\label{for:gp_meanvar}\\
	\var_i(\f\vert \bm{z}^*,\!\mathcal D)=&k(\bm{z}^*,\bm{z}^*)\!-\!\bm{k}(\bm{z}^*\!,X)^\top{(K+\sigma_i^2 I)}^{-1}\notag\\
	&\bm{k}(\bm{z}^*\!,X).
\end{align}%
\begin{rem}
The existence of the inverse Gram matrix is essential for the prediction step. The Gram matrix is invertable if all vectors in the feature space~$\phi(X_{:,1}),\ldots,\phi(X_{:,{n_\D}})$ are independent. If there exist an~$i,j\in\N$ such that~$\phi(X_{:,i})=\phi(X_{:,j})$ or the number of training points exceeds the dimensionality of the feature space, i.e~${n_\D}>dim(\Upsilon)$, the condition can be violated. In this case, the Moore–Penrose-pseudoinverse is used. For further discussion on regularization methods for GPs, see~\cite{mohammadi2016analytic}.
\end{rem}
The~$n_f$ normally distributed components of~$\bm f\vert \bm{z}^*,\mathcal D$ are combined into a multi-variable Gaussian distribution 
\begin{align}
		\bm f\vert \bm{z}^*,\mathcal D &\sim \mathcal{N} (\bm\mean(\cdot),\Var(\cdot))\notag\\
		\bm \mean(\bm f\vert \bm{z}^*,\mathcal D)&=[\mean(f_1\vert \bm{z}^*,\mathcal D),\ldots,\mean(f_{n_f}\vert \bm{z}^*,\mathcal D)]\tran\label{for:multigp}\\
		\Var(\f\vert \bm{z}^*,\mathcal D)&=\diag(\var(f_1\vert \bm{z}^*,\mathcal D),\ldots,\var(f_{n_f}\vert \bm{z}^*,\mathcal D))\notag,
\end{align} 
where~$\Phi_i=\{\varphi_1,\ldots,\varphi_{n_h}\}$ is the set of hyperparameters for the~$i$-th output dimension. The hyperparameters are typically optimized by means of the likelihood function, thus by~$\varphi^{\{j\}}_i = \arg\max_{\varphi^{\{j\}}} \log P(Y_{:,i}|X,\varphi^{\{j\}})$ for all~$i\in\{1,\ldots,m\}$ and~$j\in\{1,\ldots,n_h\}$. A gradient based algorithm is often used to find at least a local maximum of the likelihood function~\cite{rasmussen2006gaussian}.
\begin{rem}
	Also the correlation between the dimensions of the state variable can be considered, e.g., by placing a separate covariance functions on the GP outputs~\cite{rakitsch2013all} or by using a multiple-output covariance function~\cite{melkumyan2011multi}.
\end{rem}
%%%%%%%%%%%%%%%%%%%%%%%%%%%%%%%%%%%%%%%%%%%%%%%%%%%%%%%%%%%%%%%%%%%%%%%%%%%%%%%%%%%%%%%%%%%%%%%%%%%%
\subsection{Gaussian Process Dynamical Models}
\label{sec2:sec:GPDM}
Black-box models of nonlinear systems can be classified in many different ways. One main aspect of GPDMs is to distinguish between recurrent structures and non-recurrent structures. A model is called recurrent if parts of the regression vector depend on the outputs of the model. Even though recurrent models become more complex in terms of their behavior, they allow to model sequences of data, see~\cite{sjoberg1995nonlinear}. If all states are fed back from the model itself, we get a simulation model, which is a special case of the recurrent structure. The advantage of such a model is its property to be independent of the real system's states. Thus, it is suitable for simulations, as it allows multi-step ahead predictions without the need of the real system. In this article, we focus on two often-used recurrent structures: the Gaussian process state space model (GP-SSM) and the Gaussian process nonlinear error output (GP-NOE) model.
\subsubsection{Gaussian Process State Space Models}
Gaussian process state space models are structured as a discrete-time system. In this case, the states are the regressors, which is visualized in~\cref{fig:MS_SSM}. 
This approach allows to be more efficient, since the regressors are less restricted in their internal structure. Thus, a very efficient model in terms of number of regressors might be possible. The mapping from the states to the output can often be assumed to be known. The situation, where the output mapping describes a known sensor model, is such an example. It is mentioned in~\cite{frigola2013bayesian} that using too flexible models for both -~$\f$ and the output mapping - can result in problems of non-identifiability. Therefore, we focus on a known output mapping. The mathematical model of the GP-SSM is thus given by
\begin{align}
	\xkp&=\f(\vxi_t)=\begin{cases} 
f_1(\vxi_t)\sim \GP\left(m^1(\vxi_t),k^1(\vxi_t,\vxi_t^\prime)\right)\\
\vdots\hspace{0.9cm}\vdots\hspace{0.5cm}\vdots\\
f_{n_x}(\vxi_t)\sim \GP\left(m^{n_x}(\vxi_t),k^{n_x}(\vxi_t,\vxi_t^\prime)\right).
\end{cases}\notag\\
	\yk&\sim p(\yk\vert\xk,\bm{\gamma}_y),\label{for:gp_ssm}
\end{align}
where~$\vxi_t\in\R^{n_\xi},n_\xi=n_x+n_u$ is the concatenation of the state vector~$\xk\in\X\subseteq\R^{n_x}$ and the input~$\uk\in\mathcal{U}\subseteq\R^{n_u}$ such that~$\vxi_t=[{\xk};\vu_t]$. The mean function is given by continuous functions~$m^1,\ldots,m^{n_x}\colon\R^{n_\xi}\to\R$. The output mapping is parametrized by a known vector~$\bm{\gamma}_y\in\R^{n_\gamma}$ with~$n_\gamma\in\N$. The system identification task for the GP-SSM mainly focuses on~$\f$ in particular. It can be described as finding the state-transition probability conditioned on the observed training data.
\begin{rem}
	The potentially unknown number of regressors can be determined using established nonlinear identification techniques as presented in~\cite{keviczky1999nonlinear}, or exploiting embedded techniques such as automatic relevance determination~\cite{kocijan2016modelling}.
\end{rem}
\begin{figure}[bht]
	\begin{center}
		\tikzsetnextfilename{model_structure_ssm}
		\begin{tikzpicture}[node distance=2.5cm,auto,>=latex]
	\def\nodedist{2.5cm}
	\def\muxerheight{2cm}
	\tikzstyle{mux}=[draw, fill=black, minimum height=\muxerheight,minimum width=0.05cm,inner sep=0pt]
	\tikzstyle{block}=[draw, minimum height=0.5cm,minimum width=0.5cm,inner sep=2pt]
	\tikzstyle{textblock}=[text width=0.5cm, anchor=center,align=center,inner sep=0pt]
    \node [mux] (muxer1) {};
    \path (muxer1.west)+(-\nodedist,-\muxerheight/4) node (input_u) [coordinate] {};
    \path (muxer1.west)+(-\nodedist/2.5,\muxerheight/4) node (q) [block] {$q^{-1}$};
    \path (q.west)+(-\nodedist/8,0) node (qc) [coordinate] {};
    \path (muxer1.east)+(\nodedist/30,0) node (cord1) [coordinate] {};
    \path (muxer1.east)+(\nodedist/4,\muxerheight/8*3) node (GP1) [block] {$\GP$};
    \path (muxer1.east)+(\nodedist/4,\muxerheight/8*1) node (GP2) [block] {$\GP$};
    \path (muxer1.east)+(\nodedist/4,-\muxerheight/8*1) node (text1) [textblock] {\rvdots};
    \path (muxer1.east)+(\nodedist/4,-\muxerheight/8*3) node (GPn) [block] {$\GP$};
    \path (muxer1.east)+(\nodedist/2,0) node (muxer2) [mux] {};
    \path (muxer2.east)+(\nodedist/2,0) node (output_x) [coordinate] {};
    \path (output_x)+(\nodedist/2,0) node (GPy1) [block] {$p(\ykp\vert\xkp,\bm{\gamma}_y)$};
    \path (GPy1.east)+(\nodedist/2,0) node (output_y) [coordinate] {};
    \path[->] (q) edge node [pos=.4,yshift=-1mm] {$\xk$} (muxer1.west |- q.east);
    \path[->] (input_u) edge node [pos=.1,yshift=-1mm] {$\uk$} (muxer1.west |- input_u);
    \draw[->] (muxer1) -- (cord1) |- (GP1);
    \draw[->] (muxer1) -- (cord1) |- (GP2);
    \draw[->] (muxer1) -- (cord1) |- (GPn);
    %
    %\draw[->] (output_x) |- (GPy1);
    %
    \path[->] (GP1) edge (muxer2.west |- GP1);
    \path[->] (GP2) edge (muxer2.west |- GP2);
    \path[->] (GPn) edge (muxer2.west |- GPn);
    \draw[->] (muxer2) -- node [pos=.5,yshift=-1mm] {$\xkp$} (GPy1);
    \draw[->] (GPy1) -- node [pos=.5,yshift=-1mm] {$\ykp$} (output_y);
    \draw[->] (GPy1) -| node[below] (a) {} +(-\nodedist*0.95,\muxerheight/1.8) -| node[below] (b) {} (qc) -- (q); 
    \node[draw,dashed,inner xsep=0.1cm,inner ysep=0.1cm,fit=(muxer1) (muxer2) (q) (a) (qc) (b) (GPy1)] (box1)  {};
    %\node[below] at (box1.south) {$\f$};
    %\node[below] at (box2.south) {$\g$};
\end{tikzpicture}
		\caption{Structure of a GP-SSM with~$q$ as backshift operator, such that~$q^{-1}\xkp=\xk$.}
		\label{fig:MS_SSM}
	\end{center}
\end{figure}
\subsubsection{Gaussian Process Nonlinear Output Error Models}
The GP-NOE model uses the past~$n_\text{in}\in\N_{>0}$ inputs~$\uk\in\mathcal{U}\subseteq\R^{n_u}$ and the past~$n_\text{out}\in\N_{>0}$ output values~$\yk\in\R^{n_y}$ of the model as the regressors.~\Cref{fig:MS_NOE} shows the structure of GP-NOE, where the outputs are fed back. In combination with neural networks, this model type is also known as \emph{parallel model}.
The mathematical model of the GP-NOE is given by
\begin{align}
\begin{split}
	&\ykp=\h(\zk)=\begin{cases} 
h_1(\zk)\sim \GP\left(m^1(\zk),k^1(\zk,\zk^\prime)\right)\\
\vdots\hspace{0.9cm}\vdots\hspace{0.5cm}\vdots\\
h_{n_y}(\zk)\sim \GP\left(m^{n_y}(\zk),k^{n_y}(\zk,\zk^\prime)\right),
\end{cases}
\end{split}\label{for:gp_noe}
\end{align}
where the vector~$\zk\in\R^{n_\zeta},n_\zeta=n_\text{out} n_y+n_\text{in} n_u$ is the concatenation of the past output values~$\yk$ and input values~$\uk$ such that~$\zk=[\y_{t-n_\text{out}+1};\ldots;\y_t;\vu_{t-n_\text{in}+1};\ldots;\vu_t]$. 
The mean function is given by continuous functions~$m^1,\ldots,m^{n_y}\colon\R^{n_\zeta}\to\R$. In contrast to nonlinear autoregressive exogenous models, that focus on one-step ahead prediction, a NOE model is more suitable for simulations as it considers the multi-step ahead prediction~\cite{nelles2013nonlinear}. However, the drawback is a more complex training procedure that requires a nonlinear optimization scheme due to their recurrent structure~\cite{kocijan2016modelling}.
\begin{figure}[t]
	\begin{center}
		\tikzsetnextfilename{model_structure_noe}
		\vspace{0.2cm}
		\begin{tikzpicture}[node distance=2.5cm,auto,>=latex]
	\def\nodedist{2.5cm}
	\def\muxerheight{2cm}
	\def\muxerheightb{2cm}
	\tikzstyle{mux}=[draw, fill=black, minimum height=\muxerheight,minimum width=0.05cm,inner sep=0pt]
	\tikzstyle{block}=[draw, minimum height=0.5cm,minimum width=0.5cm,inner sep=2pt]
	\tikzstyle{textblock}=[text width=0.5cm, anchor=center,align=center,inner sep=0pt]
    \node [mux,minimum height=\muxerheightb] (muxer1) {};
    \path (muxer1.west)+(-\nodedist*2.1,-\muxerheightb/10) node (input_u) [coordinate] {};
    \path (muxer1.west)+(-\nodedist,-\muxerheightb/10*2.5) node (text2) [textblock] {\rvdots};
    \path (muxer1.west)+(-\nodedist*2.1,-\muxerheightb/10*4) node (input_un) [coordinate] {};
    \path (muxer1.west)+(-\nodedist/0.8,\muxerheightb/10*4) node (q) [block] {$q^{-\!1}$};
    \path (muxer1.west)+(-\nodedist,\muxerheightb/10*2.5) node (text3) [textblock] {\rvdots};
    \path (muxer1.west)+(-\nodedist/1.4,\muxerheightb/10) node (qn) [block] {$q^{-\!n_\text{out}}$};
    \path (q.west)+(-\nodedist/8,0) node (qc) [coordinate] {};
    \path (muxer1.east)+(\nodedist/12,0) node (cord1) [coordinate] {};
    \path (muxer1.east)+(\nodedist*0.3,\muxerheight/8*3) node (GP1) [block] {$\GP$};
    \path (muxer1.east)+(\nodedist*0.3,\muxerheight/8*1) node (GP2) [block] {$\GP$};
    \path (muxer1.east)+(\nodedist*0.3,-\muxerheight/8*1) node (text1) [textblock] {\rvdots};
    \path (muxer1.east)+(\nodedist*0.3,-\muxerheight/8*3) node (GPn) [block] {$\GP$};
    \path (muxer1.east)+(\nodedist*0.6,0) node (muxer2) [mux] {};
    \path (muxer2.east)+(\nodedist/2,0) node (output_x) [coordinate] {};
    \path[->] (q) edge node [yshift=-1mm,pos=.3,text width=0.5cm,align=left] {$\yk$} (muxer1.west |- q.east);
    \path[->] (qn) edge node [yshift=-1.2mm,pos=.25,text width=0.5cm,align=left] {$\y_{t\!-\!n_\text{out}\!+\!1}$} (muxer1.west |- qn.east);
    \path[->] (input_u) edge node [yshift=-1mm,pos=.05,text width=0.5cm,align=left] {$\uk$} (muxer1.west |- input_u);
    \path[->] (input_un) edge node [yshift=-1mm,pos=.05,text width=0.5cm,align=left] {$\vu_{t\!-\!n_\text{in}\!+\!1}$} (muxer1.west |- input_un);
    \draw[->] (muxer1) -- (cord1) |- (GP1);
    \draw[->] (muxer1) -- (cord1) |- (GP2);
    \draw[->] (muxer1) -- (cord1) |- (GPn);
    \path[->] (GP1) edge (muxer2.west |- GP1);
    \path[->] (GP2) edge (muxer2.west |- GP2);
    \path[->] (GPn) edge (muxer2.west |- GPn);
    \draw[->] (muxer2) -- node [yshift=-1mm,pos=.6] {$\ykp$} (output_x);
    \draw[->] (output_x) -| node[below] (a) {} +(-\nodedist*0.45,\muxerheightb/1.6) -| node[below] (b) {} (qc) -- (q); 
    \draw[->] (qc) |- (qn);
    \node[draw,dashed,inner xsep=0.05cm,inner ysep=0.1cm,fit=(muxer1) (muxer2) (q) (a) (qc) (b)] {};
\end{tikzpicture}
		\vspace{0cm}\caption{Structure of a GP-NOE model with~$q$ as backshift operator, such that~$q^{-1}\ykp=\yk$.}\vspace{-0.2cm}
		\label{fig:MS_NOE}
	\end{center}
\end{figure}
\begin{rem}
It is always possible to convert an identified input-output model into a state-space model, see~\cite{phan1970relationship}. However, focusing on state-space models only would preclude the development of a large number of useful identification result for input-output models.
\end{rem}
\subsubsection{Training}
	As the article focuses on the properties of GPDMs, regardless of the training procedure, we assume for the remainder of the paper that a training set~$\mathcal{D}$ is existent and available. In case of a GP-SSM, the training set consists of $X=[\vxi_0,\ldots,\vxi_{n_\D-1}]$ with $\vxi_t=[\x_t^\top,\uk^\top]^\top$ as input data, and $Y=[\x_1,\ldots,\x_{n_\D}]^\top$ as output data. In contrast, the training set of a GP-NOE model consists of $X=[\z_0,\ldots,\z_{n_\D-1}]$ with $\z_t=[\yk^\top,\uk^\top]^\top$ as input data, and $Y=[\y_1,\ldots,\y_{n_\D}]^\top$ as output data. For the sake of completeness, we present a simple way how to collect training data in the following. In case of available state measurements, the training data set for a GP-SSM can be directly created by recording the states and inputs as depicted in~\cref{fig:TD}. Here, the discrete-time system to model is operated by an arbitrary controller.
	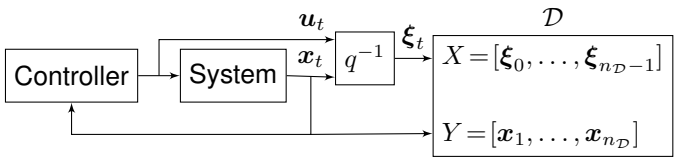
\begin{figure}[b]
	\begin{center}
	 \tikzsetnextfilename{training_bsb}	
	 \vspace{-0.3cm}
	 	\begin{tikzpicture}[auto, node distance=3cm,>=latex']
\tikzstyle{block} = [draw, fill=white, rectangle, minimum height=2em, minimum width=1em]
\tikzstyle{sum} = [draw, fill=white, circle, node distance=1cm]
\tikzstyle{input} = [coordinate]
\tikzstyle{output} = [coordinate]
\tikzstyle{t_output} = []
\tikzstyle{pinstyle} = [pin edge={to-,thin,black}]

    \node [input, name=input] {};
    \node [block, right of=input,node distance=2cm] (controller) {$\textsf{Controller}$};
    \node [block, right of=controller,node distance=2.15cm] (robot) {$\textsf{System}$};
    \node [block, right of=robot,node distance=1.75cm,yshift=2.3mm] (hatrobot) {$q^{-1}$};
    \node [block, right of=hatrobot,node distance=2.5cm,yshift=-5mm,label=$\D$,text width=8.4em] (data) {$X\!=\![\vxi_0,\ldots,\vxi_{n_\D-1}]$\\\vspace{0.6cm} $Y\!=\![\x_1,\ldots,\x_{n_\D}]$};
    
	\draw [->] (controller) -- node[name=t,below] {} (robot);
    \draw [->] (robot.east) -- node[name=q] {$\x_t$} ([yshift=-2.5mm]hatrobot.west);
    
    \coordinate [below of=q,node distance=1cm] (dp) {};
    \draw [->] (q) |- (dp) -| (controller);

    \draw [->] (t.north) |- node[pos=0.935] {$\bm{u}_t$} ([yshift=2.5mm]hatrobot.west);
    \draw [->] (hatrobot.east) -- node[] {$\vxi_t$} ([yshift=5mm]data.west);
    \draw [->] (dp) -- ([yshift=-5mm]data.west);

\end{tikzpicture}
		\caption{Block diagram of the generation of the training data set $\mathcal D$ with $n_\D$ data points for GP-SSMs.\label{fig:TD}}
	\end{center}
\end{figure}
	The only condition on the controller is that a finite sequence of training data of the system can be collected whereas stability is not necessarily required. As stated in~\cref{sec2:sec:GPDM}, the transition from the state $\xk$ to the output $\yk$ is assumed to be known.\\
	However, if the state is intractable, well-known methods for the training are based on variational inference and the introduction of inducing points, see~\cite{wang2008gaussian,frigola2013bayesian,eleftheriadis2017identification} for more details. More information about the training procedure of GP-NOE models are presented in~\cite{kocijan2005dynamic,kocijan2011output}.
	\begin{rem}
	In this article, the training input and output data is always denoted by $X$ and $Y$, respectively, to be in line with the standard notation. Note that in case of GP-SSMs, the set $Y$ does not contain the outputs $\yk$ but the next states ahead.
	\end{rem}
\subsection{The crux of simulation}
\label{sec:sim}
The prediction with discrete-time GPDMs, needed for simulations and model-based control approaches, is more challenging than GP prediction: The reason is the feedback of the model's output to the input that manifests as correlation between the current and past states defined by the GP model. Therefore, a prediction with the presented GP models in~\cref{sec2:sec:GPDM} would require the sampling of the probabilistic mappings~$\dyn$ and~$\h$ given by~\cref{for:gp_ssm} and~\cref{for:gp_noe}, respectively. Once sampled, the model could be treated as standard discrete-time system. Unfortunately, these functions are defined on the sets~$\X\subseteq\R^{n_x},\mathcal{U}\subseteq\R^{n_u}$ which contain infinitely many points. Thus, it would be necessary to draw an infinite-dimensional object which represents a sample of the probabilistic mappings. This is not possible without further simplifications, e.g., discrete sampling with interpolation, see~\cite{umlauft:ecc2018}. To overcome this issue, the probabilistic mapping is marginalized out to respect the nonparametric nature of the model, see~\cite{matthews2016sparse} for more details. The result is a probability distribution of the states without dependencies on the probabilistic mappings~$\dyn,\h$. However, the marginalization of~$\dyn,\h$ leads to dependencies across time for the states. For the prediction of the next state ahead $\xkp$, the nature of GP models allows to include the past states as noise-free "training data" in a way that there exists an analytic closed-form, see \cite{frigola2016bayesian}. We formally restate the non-Markovian prediction as given in~\cite{matthews2016sparse} in the next property.
\begin{rem}
    For the sake of notational simplicity, we consider GPDMs with identical kernels and identical noise of the training data for each output dimension. The results can easily be extended to GPDMs with different kernels and noise for each output dimension.
\end{rem}
\begin{propy}\label{propy:gpssm}
    Consider a GP-SSM~\cref{for:gp_ssm} with training set~$\D=\{X,Y\}$, where~$Y$ is corrupted by Gaussian noise~$\mathcal{N}(0,\sigma_n^2 I)$. Then, the conditional distribution of the next state ahead~$\xkp\in\R^{n_x}$ and output~$\y_{t+1}\in\R^{n_y}$ is given by
    \begin{align}
    \begin{split}
    	\xkp\vert \vxi_{0:t},\D&\sim\mathcal{N}\big(\Mean(\xkp\vert\vxi_{0:t},\mathcal{D}),\Var(\xkp\vert\vxi_{0:t},\mathcal{D})\big)\\
    	\mean_i(\cdot)&=m(\vxi_t)+\bm{k}(\vxi_t, X_t)^\top {K}_t^{-1}\big([ {Y_t}]_{:,i}-\bm{m}( X_t)\big)\\
    	\Var_{i,i}(\cdot)&=k(\vxi_t,\vxi_t)-\bm{k}(\vxi_t, X_t)^\top {K}_t^{-1} \bm{k}(\vxi_t, X_t)\\
    	p(\y_{t+1}\vert \vxi_{0:t}&,\bm{\gamma}_y,\D)=p(\y_{t+1}\vert \x_{t+1},\bm{\gamma}_y)p(\x_{t+1}\vert \vxi_{0:t},\D)
    	\end{split}\label{for:gpssm_pred}
    \end{align}
    with~$\x_0\in\R^{n_x}$ for all~$t\geq 0$ and extended data matrices~$ X_t\in\R^{n_\xi\times({n_\D}+t)}$,~$ Y_t\in\R^{({n_\D}+t)\times n_x}$
    \begin{align}
    	 X_t&=X,& Y_t&=Y & &\text{if } t=0\notag\\ 
    	 X_t&=[X,\vxi_{0:t-1}],& Y_t&=[Y^\top,\x_{1:t}]^\top & &\text{otherwise}. \label{sec3:for:gpssmindex1}
    \end{align}
    The Gram matrix~${K}_t\in\R^{({n_\D}+t)\times({n_\D}+t)}$ is defined as
    \begin{align}
    {K}_t=\begin{cases}
    	\begin{bmatrix}
    		K+\sigma_n^2 I & K(\vxi_{0:t-1},X)\\
    		K(\vxi_{0:t-1},X)^\top & K(\vxi_{0:t-1},\vxi_{0:t-1})
    	\end{bmatrix}, & \text{if } t>0\\
    	K+\sigma_n^2 I, & \text{otherwise}.
    \end{cases}\label{for:gram}
    \end{align}
\end{propy}
\begin{rem}
    \Cref{propy:gpssm} shows that dependency across time appears through past states and inputs treated as noise-free "training data". However, it should not be mistaken as real training data, which is also included in the Gram matrix $K_t$, but seen as a way to include the correlation through time.
\end{rem} 
For the first step, i.e~$t=0$, the conditional distribution as given in~\cref{propy:gpssm} is identical to the standard GP prediction with predicted mean and variance given by~\cref{for:gp_meanvar}. For~$t>0$, the current state is feedbacked to the input, as shown in~\cref{fig:MS_SSM}. Using the joint Gaussian distribution property~\cref{sec2:for:joint_dist} of the GP, we obtain the joint distribution 
\begin{align}
	&\begin{bmatrix} ( Y_t)_{:,i} \\ (x_{t+1})_i \end{bmatrix}\sim \mathcal{N} \left(\begin{bmatrix}\bm m(X)\\\bm m(\vxi_{0:t})\end{bmatrix},
	\begin{bmatrix}  {K}_t 		&	{K^\prime}_t\\
					{K^\prime}^\top_t	&	k(\vxi_t,\vxi_t) \end{bmatrix}\right),\label{sec3:for:jdproof}
\end{align}
where~${K^\prime}^\top_t=\left[\bm{k}(\vxi_t,X)^\top, \bm{k}(\vxi_t,\vxi_{0:t-1})^\top\right]$. Based on~\cref{sec3:for:jdproof}, the conditional probability distribution of the next state ahead~$\xkp$ is computed.
\begin{rem}
    If we consider a state feedback law $\inputu_t=\bm{g}(\x_t)$ with $\bm{g}\colon\R^{n_x}\to\R^{n_u}$, the extended input vector is then given by $\vxi_t=[\x_t^\top,\bm{g}(\x_t)^\top]^\top$ and \cref{propy:gpssm} can also be used to sample trajectories for closed-loop simulations.
\end{rem} 
Analogously, we introduce the prediction for the GP-NOE model.
\begin{propy}\label{thm:gpnoe}
Consider a GP-NOE model~\cref{for:gp_noe} with training set~$\D=\{X,Y\}$, where the output data~$Y$ is corrupted by Gaussian noise~$\mathcal{N}(0,\sigma_n^2 I)$. Then, the conditional distribution of the next output~$\ykp\in\R^{n_y}$ is given by
\begin{align}
\begin{split}
	\ykp\vert \z_{0:t},\D&\sim\mathcal{N}\big(\Mean(\ykp\vert\z_{0:t},\mathcal{D}),\Var(\ykp\vert\z_{0:t},\mathcal{D})\big)\\
	\mean_i(\cdot)&=m(\zk)+\bm{k}(\zk, X_t)^\top {K}_t^{-1}\big([ {Y_t}]_{:,i}-\bm{m}( X_t)\big)\\
	\Var_{i,i}(\cdot)&=k(\zk,\zk)-\bm{k}(\zk, X_t)^\top {K}_t^{-1} \bm{k}(\zk, X_t)
	\end{split}\label{for:gpnoe_pred}
\end{align}
with~$\z_{0}\in\R^{n_\zeta}$ for all~$t\geq 0$ and the extended data matrices~$ X_t\in\R^{n_\zeta\times({n_\D}+t)}$,~$ Y_t\in\R^{({n_\D}+t)\times n_y}$
\begin{align}
	 X_t&=X,& Y_t&=Y & & \text{if } t=0\notag\\ 
	 X_t&=[X,\z_{0:t-1}],& Y_t&=[Y^\top,\y_{1:t}]^\top & &\text{otherwise}.
\end{align}
The Gram matrix~${K}_t\in\R^{({n_\D}+t)\times({n_\D}+t)}$ is defined as
\begin{align}
{K}_t=\begin{cases}
	\begin{bmatrix}
		K+\sigma_n^2 I & K(\z_{0:t-1},X)\\
		K(\z_{0:t-1},X)^\top & K(\z_{0:t-1},\z_{0:t-1})
	\end{bmatrix}, & \text{if } t>0\\
	K+\sigma_n^2 I, & \text{otherwise}.
\end{cases}\label{for:gramnoe}
\end{align}
\end{propy}
\subsection{Need for Markovian models}
The previous section shows that the next step ahead state~$\xkp$ of a GP-SSM is a sample drawn from a Gaussian distribution with the posterior mean and variance based on the previous states and inputs. This leads to dependencies between the states such that the dynamical model loses the Markov property, i.e.,~$\xkp$ depends not only on~$\xk$ but on all previous states~$\x_{0:t}$. The resulting issues are from theoretical and practical nature as presented in the following.
\subsubsection{Theoretical issues}
The proof of system properties such as stability, convergence rate, and performance is key for mainly all applications in control systems - especially in safety critical environments as, e.g., autonomous driving. Over the last decades, the control community has developed a large amount of tools for the analysis and control synthesis for dynamical systems. However, these tools mainly focus on systems with Markov property. For instance, the Lyapunov stability is a standard concept for the analysis of nonlinear open- and closed-loop systems which needs the Markov property. For non-Markovian systems, the amount of tools are significantly decreased such that the analysis of control systems with GPDMs is challenging.
\subsubsection{Practical issues}
As the past states are treated as new "training points" without noise, the size of the covariance matrix $K_t$ increases with each time step, see~\cref{propy:gpssm}. This results not only in a strongly increasing computing time for the prediction but also an intractable memory problem for long time simulations. Even though recent findings allows the computing time to be at least linear, e.g. see~\cite{wilson2020efficiently}, the control related implications of these approximations are still unaddressed.
%%%%%%%%%%%%%%%%%%%%%%%%%%%%%%%%%%%%%%%%%%%%%%%%%%%%%%%%%%%%%%%%%%%%%%%%%%%%%%%%%%%%%%%%%%%%%%%%%%%%
%%%%%%%%%%%%%%%%%%%%%%%%%%%%%%%%%%%%%%%%%%%%%%%%%%%%%%%%%%%%%%%%%%%%%%%%%%%%%%%%%%%%%%%%%%%%%%%%%%%%
\section{Approximated Models}
\label{sec:approx}
To overcome the issue of the non-Markovian property of GPDMs, we introduce approximated GPDMs where only a subset of previous states and inputs are considered for the prediction. These models allow to use all analyse tools available for Markovian systems and to keep the computation time constant. First, we introduce the formal description of this approximated model. For this purpose, we define the matrix~$\Xi^m_t\in\R^{n_\xi\times \ubar{m}}$ consisting of past states and inputs as
\begin{align}
	\Xi^m_t&\coloneqq\begin{cases}
			\emptyset & \text{if }\overline{m}=0\vee t=0\\
	[\vxi_{t-1},\ldots,\vxi_{t-\ubar{m}}] & \text{otherwise,}\\ %\in\R^{n_\xi\times \min{(t,m)}}
	\end{cases}\label{for:memgpssm}
\end{align}
which are used for the prediction. The \emph{maximum length of memory}~$\overline{m}\in\N$ defines how many past states and inputs are considered for the prediction of the next state. The resulting \emph{actual length of memory}~$\ubar{m}=\min(t,\overline{m})$ is the number of states and inputs which are actually available. The actual length and the maximum length only differ if the number of past states beginning with~$\x_0$ is less than~$\overline{m}$. The prediction of the next state ahead and the output~$\ykp\in\R^{n_y}$ is given by
\begin{align}	
		\xkp^m&\sim \ND\big(\underbrace{\Mean(\xkp^m\vert\vxi_t,\Xi^m_t,\mathcal{D})}_{\dyn_t(\vxi_t,\Xi^m_t)},\underbrace{\Var(\xkp^m\vert\vxi_t,\Xi^m_t,\mathcal{D})}_{F_t(\vxi_t,\Xi^m_t)}\big)\notag\\
		\ykp\vert\xkp^m&\sim p(\ykp\vert\xkp^m,\bm{\gamma}_y).\label{for:gp_ssm_m}
\end{align} 
For simplicity in the notation, we introduce two helper functions~$\dyn_t\!\colon\!\R^{n_\xi}\!\times\!\R^{n_\xi\times \ubar{m}}\!\to\!\R^{n_x}$ and~$F_t\!\colon\!\R^{n_\xi}\!\times\!\R^{n_\xi\times \ubar{m}}\!\to\!\R^{n_x\times n_x}$. The posterior mean and variance of the~$i$-th element of~$\xkp^m$ is given by
\begin{align}
\begin{split}
&f_t(\vxi_t,\Xi^m_t)_i\!=\!m(\vxi_t)\!+\!\bm{k}(\vxi_t, X^m_t)^\top \!{({K}^m_t)}^{-1}\!\big([ Y_t^m]_{:,i}\!-\!\bm{m}({X}^m_t)\!\big)\\
&F_t(\vxi_t,\Xi^m_t)_{i,i}\!=k(\vxi_t,\vxi_t)-\bm{k}(\vxi_t, X^m_t)^\top {({K}^m_t)}^{-1} \bm{k}(\vxi_t,{X}^m_t),
\end{split}\label{sec3:for:gpassmmeanvar}
\end{align}
respectively. The extended data matrices~$ X^m_t\in\R^{n_\xi\times({n_\D}+\ubar{m})}$ and~$ Y^m_t\in\R^{({n_\D}+\ubar{m})\times n_y}$ are denoted by
\begin{align}
	 X^m_t&=X,& Y^m_t&=Y & &\hspace{-0.9cm}\text{if } \overline{m}\!=\!0\vee t\!=\!0\notag\\ 
	 X^m_t&=[X,\vxi_{t-\overline{m}:t-1}],& Y^m_t&=[Y^\top,\x_{t-\overline{m}+1:t}]^\top & &\text{otherwise}\label{sec3:for:gpssmindex2}
\end{align}
where only elements back to $t=0$ in case of negative $t-\overline{m}$ are considered. The corresponding Gram matrix~${K}^m_t\in\R^{({n_\D}+\ubar{m})\times({n_\D}+\ubar{m})}$ is given by
\begin{align}
	{K}^m_t=&\begin{bmatrix}
		K(X,X)+\sigma_n^2 I & K(\vxi_{t-\overline{m}:t-1},X)\\
		K(\vxi_{t-\overline{m}:t-1},X)^\top & K(\vxi_{t-\overline{m}:t-1},\vxi_{t-\overline{m}:t-1})
	\end{bmatrix}\label{for:gramm}\\
	&\text{if }t>0\wedge \overline{m}>0 \text{ and } K(X,X)+\sigma_n^2 I \text{ otherwise}.\notag
\end{align}
Note that the prediction in~\cref{for:gp_ssm_m} is based on the past states and inputs back to the time step~$t-\overline{m}$, as indicated in~\cref{sec3:for:gpssmindex2}. In contrast, the prediction of a GP-SSM is based on the full history of states and inputs, see~\cref{sec3:for:gpssmindex2}.
\begin{defn}
\label{sec3:defn:gp_assm}
We call~\cref{for:gp_ssm_m} a \emph{Gaussian process approximated state space model} (GP-ASSM) with maximum memory length~$\overline{m}$.
\end{defn}
\begin{rem}
For~$\overline{m}=\infty$, the prediction depends on all past states, i.e.,
\begin{align}
	\xkp^\infty\sim \ND\big(\!\Mean(\xkp^\infty\vert\vxi_t,\ldots,\vxi_0,\mathcal{D}),\Var(\xkp^\infty\vert\vxi_t,\ldots,\vxi_0,\mathcal{D})\big)\label{sec3:for:gpassminf}
\end{align}
and thus, equals the true distribution in~\cref{for:gpssm_pred} without Markovian property. The most simple approximation is given for maximum memory length~$\overline{m}=0$
\begin{align}
	\xkp^0\sim \ND\big(\Mean(\xkp^0\vert\vxi_t,\mathcal{D}),\Var(\xkp^0\vert\vxi_t,\mathcal{D})\big),\label{sec3:for:gpassmnull}
\end{align}
where the next state ahead is independent of all past states except the current state and input~$\vxi_t$. GP-ASSMs with finite maximum length of memory~$\overline{m}$ are \emph{Markov chains of finite order} as they depend on a finite set of past states and input. 
\end{rem}
\Cref{fig:time_dep} visualizes the relation between actual length~$\ubar{m}$ and the maximum length~$\overline{m}$.
\begin{exam}
\label{sec3:exam:approx}
The idea of the presented approximation is visualized in the top plot of~\cref{fig:sr} by a one-dimensional GP-ASSM with maximum length of memory~$\overline{m}=0$. For the sake of simplicity, the external input is set to zero~$u_t=0$ for all~$t\in\N$. The distribution of the next state ahead depends only on the current state~$x^0_t$ as it is always sampled from a Gaussian distribution disregarding the history of the past states. Thus, for a given~$x^0_0$, the next state~$x^0_1$ (red circle) is sampled from a Gaussian distribution (green line), where the mean and variance are based on~$x^0_0$, see~\cref{sec3:for:gpassmnull}. In the next time step,~$x^0_2$ (red circle) is sampled from a Gaussian distribution (green line), where the mean and variance are solely based on~$x^0_1$. 
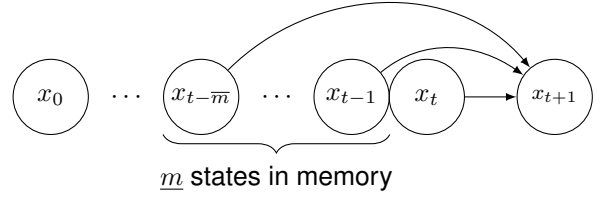
\begin{figure}[ht]
	\begin{center}
		\tikzsetnextfilename{time_dep}
		\vspace{0cm}
		\begin{tikzpicture}[node distance=1cm,auto,>=latex,font={\sffamily}]
	\tikzstyle{block}=[draw,circle, minimum height=1cm,minimum width=1cm,inner sep=1pt]
	\tikzstyle{textblock}=[text width=0.4cm, anchor=center,align=center,inner sep=0pt]
    \node [block] (x0) {$x_0$};
    \node [textblock, right of=x0] (d1) {$\cdots$};
    \node [block, right of=d1] (x2) {$x_{t-\overline{m}}$};
    \node [textblock, right of=x2] (d2) {$\cdots$};
    \node [block, right of=d2] (x3) {$x_{t-1}$};
    \node [block, right of=x3] (x4) {$x_{t}$};
    \node [block, right of=x4,node distance=1.7cm] (x5) {\footnotesize $x_{t+1}$};
    \draw[->] (x4) to (x5);
    \draw[->] (x3) to [out=40,in=145] (x5);
    \draw[->] (x2) to [out=45,in=130] (x5);
    \draw [decorate,decoration={brace,mirror,amplitude=7}] ([yshift=-5mm]x2.west) --node[below=3mm]{$\ubar{m}$ states in memory} ([yshift=-5mm]x3.east);
\end{tikzpicture}
		\vspace{-0cm}\caption{Time dependencies for the next step ahead state~$\xkp$ with the actual length of the memory~$\ubar{m}=\min(t,m)$ and maximum length $\overline{m}$.}\vspace{-0.5cm}
		\label{fig:time_dep}
	\end{center}
\end{figure}
This procedure is continued for the following time steps. As the distribution (green line) of the next state~$x^0_{t+1}$ is independent of the past states~$x^0_{t-1},\ldots,x^0_0$, it is always equal to the distribution of the GP (mean and 2-sigma uncertainty) at state~$x^0_{t}$.
In contrast, the true sampling ($\overline{m}=\infty$) with a one-dimensional GP-SSM considers all past states~$x^\infty_t,\ldots,x^\infty_0$, see~\cref{sec3:for:gpassminf}. In~\cref{fig:sr1}, we start again with a given~$x_0^\infty$. The next state~$x_1^\infty$ (red circle) is sampled from a Gaussian distribution based on the initial state. Then,~$x_2^\infty$ is sampled based on~$x_1^\infty$ and~$x_0^\infty$. For this purpose, the pair~$(x_{0}^\infty,x_{1}^\infty)$ is added as noise free training data, see~\cref{sec3:for:gpssmindex1}. Thus, for any following state where~$x_{t}^\infty=x_{0}^\infty,t\geq 2$, the next state is given by~$x_{t+1}^\infty=x_{1}^\infty$. Due to the dependency on all past states, the distribution of states, which are not yet added as training data, differ from the predicted mean and variance of the GP. This is visualized at the distribution of~$x_4^\infty$ (green line) in contrast to the mean (blue line) and the 2-sigma uncertainty (gray shaded area) of the GP. This sampling procedure is necessary since the state mapping~$f$, given by~\cref{for:gp_ssm}, can not be drawn directly due to the definition over an infinite set~$\X\subseteq\R^{n_x}$. In~\cref{fig:sr1}, the mapping~$f$ is illustratively drawn (yellow line) over a finite but large number of states.
\begin{figure}[h]
	\begin{center}
		\tikzsetnextfilename{section3_scenario_vs_random}
		\captionsetup{type=figure}\begin{tikzpicture}
\begin{axis}[
    name=plot1,
  font={\sffamily},
  legend pos=north west,
  grid style={dashed,gray},
  grid = both,
       width=\columnwidth,
  height=5cm,
  ymin=3.5,
  ymax=13,
  xmin=2,
  xmax=10,
  legend style={font=\sffamily\footnotesize},
  legend cell align={left},
  legend pos={north west},
  xtick={4,4.73989425718806,5.84059997918927,7.7916654365068},
  ytick={4.73989425718806,5.84059997918927,7.79166543650687,10.0359453535049},
      xticklabels={$x_0^0$,$x_1^0$,$x_2^0$,$x_3^0$},
      yticklabels={$x_1^0$,$x_2^0$,$x_3^0$,$x_4^0$}]
  \addplot[color=blue,line width=2pt] table [x index=0,y index=1]{data/sz_approach_1.dat};
  \addplot+[name path=varp1, color=gray,opacity=0.3, no marks] table [x index=0,y expr=\thisrowno{1}+\thisrowno{2}]{data/sr_approach_1.dat};
\addplot+[name path=varm1, color=gray,opacity=0.3, no marks] table [x index=0,y expr=\thisrowno{1}-\thisrowno{2}]{data/sr_approach_1.dat};
\addplot[gray,opacity=0.5] fill between[ of = varm1 and varp1]; 
\addplot[only marks,color=black,line width=1pt,mark=+,mark size=4] table [x index=0,y index=1]{data/sr_approach_2.dat};
%\addplot[color=Goldenrod,line width=1pt] table [x index=0,y index=3]{data/sr_approach_1.dat};
\addplot[color=ForestGreen,line width=2pt] table [x index=0,y index=4]{data/sr_approach_4.dat};
\addplot[color=ForestGreen,line width=2pt] table [x index=1,y index=5]{data/sr_approach_4.dat};
\addplot[color=ForestGreen,line width=2pt] table [x index=2,y index=6]{data/sr_approach_4.dat};
\addplot[color=ForestGreen,line width=2pt] table [x index=3,y index=7]{data/sr_approach_4.dat};
\addplot[only marks,color=red,line width=1pt,mark=o,mark size=4] table [x index=0,y index=1]{data/sr_approach_3.dat};
\legend{Mean,,,2-sigma,Data points,,,,,States}
\end{axis}
\end{tikzpicture} 
		\vspace{-0.1cm}
		\captionof{figure}{Sampling of a one-dimensional GP-ASSM with squared exponential kernel.}\vspace{-0.5cm}
		\label{fig:sr}
	\end{center}
\end{figure}
\begin{figure}[h]
	\begin{center}
		\tikzsetnextfilename{section3_scenario_vs_random1}
		\vspace{-0.3cm}
		\captionsetup{type=figure}\begin{tikzpicture}
\begin{axis}[
  legend pos=north west,
  grid style={dashed,gray},
  grid = both,
       width=\columnwidth,
       font={\sffamily},
  height=5cm,
  ymin=3.5,
  ymax=13,
  xmin=2,
  xmax=10,
  legend style={font=\sffamily\footnotesize},
  legend cell align={left},
  legend pos={north west},
  xtick={4,4.73989425718806,5.70651700149693,7.66700185417600},
  ytick={4.73989425718806,5.70651700149693,7.66700185417600,10.6806477332591},
  xticklabels={$x_0^\infty$,$x_1^\infty$,$x_2^\infty$,$x_3^\infty$},
  yticklabels={$x_1^\infty$,$x_2^\infty$,$x_3^\infty$,$x_4^\infty$}]
  \addplot[color=blue,line width=2pt] table [x index=0,y index=1]{data/sz_approach_1.dat};
  \addplot+[name path=varp1, color=gray,opacity=0.3, no marks] table [x index=0,y expr=\thisrowno{1}+\thisrowno{2}]{data/sz_approach_1.dat};
\addplot+[name path=varm1, color=gray,opacity=0.3, no marks] table [x index=0,y expr=\thisrowno{1}-\thisrowno{2}]{data/sz_approach_1.dat};
\addplot[gray,opacity=0.5] fill between[ of = varm1 and varp1]; 
\addplot[only marks,color=black,line width=1pt,mark=+,mark size=4] table [x index=0,y index=1]{data/sz_approach_2.dat};
\addplot[color=Goldenrod,line width=1pt] table [x index=0,y index=3]{data/sz_approach_1.dat};
\addplot[only marks,color=red,line width=1pt,mark=o,mark size=4] table [x index=0,y index=1]{data/sz_approach_3.dat};
%\addplot[color=ForestGreen,line width=2pt] table [x index=0,y index=4]{data/sz_approach_4.dat};
%\addplot[color=ForestGreen,line width=2pt] table [x index=1,y index=5]{data/sz_approach_4.dat};
%\addplot[color=ForestGreen,line width=2pt] table [x index=2,y index=6]{data/sz_approach_4.dat};
\addplot[color=ForestGreen,line width=2pt] table [x index=3,y index=7]{data/sz_approach_4.dat};
\legend{,,,,,Sample of state mapping $f$}
\end{axis}
\end{tikzpicture} 
		\vspace{-0.1cm}
		\captionof{figure}{Sampling of a one-dimensional GP-SSM with squared exponential kernel.}\vspace{-0.5cm}
		\label{fig:sr1}
	\end{center}
	\end{figure}
\end{exam}
\begin{rem}
More advanced strategies for defining the subset of states, that are stored in the memory, are also conceivable. Namely, the same methods as for sparsification of the training data can be exploited. For instance, approaches based on the effective prior~\cite{quinonero2005unifying} or pseudo-inputs~\cite{snelson2006sparse} have already been successfully applied for sparsification.
\end{rem}
In the following, we transfer this formal description to GP-NOE models. In comparison to GP-SSMs, the GP-NOE models do not have explicitly defined states. Therefore, we define the matrix of past outputs and inputs as
\begin{align}
	\Lambda^m_t&=\begin{cases}
		\emptyset & \text{if } \overline{m}=0\vee k=0\\
		[\z_{t-1},\ldots,\z_{t-\ubar{m}}] & \text{otherwise}\\ %\in\R^{n_\xi\times \min{(k,m)}}
	\end{cases}
\end{align}
with~$\overline{m}\in\N$ defining the maximum length of memory and~$\ubar{m}=\min(t,m)$, the actual length of memory. The prediction of the next output~$\ykp\in\R^{n_y}$ is given by
\begin{align}	
		\ykp^m\sim\ND\big(\underbrace{\Mean(\ykp^m\vert\z_t,\Lambda^m_t,\mathcal{D})}_{\bm{h}_t(\z_t,\Lambda^m_t)},\underbrace{\Var(\ykp^m\vert\z_t,\Lambda^m_t,\mathcal{D})}_{H_t(\z_t,\Lambda^m_t)}\big).
	\label{for:gp_noe_m}
\end{align} 
For simplicity in the notation, we introduce the helper functions~$\h_t\colon\R^{n_\zeta}\times\R^{n_\zeta\times \ubar{m}}\to\R^{n_y}$ and~$H_t\colon\R^{n_\zeta}\times\R^{n_\zeta\times \ubar{m}}\to\R^{n_y\times n_y}$. The mean~$[h_t]_i$ and variance~$[H_t]_{i,i}$ of the~$i$-th output dimension is given by
\begin{align}
[h_t]_i=&m(\z_t)+\bm{k}(\z_t, X_t)^\top {({K}^m_t)}^{-1}([ Y_t]_{:,i}-\bm{m}({X}^m_t))\notag\\
[H_t]_{i,i}=&k(\z_t,\z_t)-\bm{k}(\z_t, X_t)^\top {({K}^m_t)}^{-1} \bm{k}(\z_t,{X}^m_t).
\end{align}
For GP-NOE models, we define the extended training sets~$ X^m_t\in\R^{n_\zeta\times({n_\D}+\ubar{m})}, Y^m_t\in\R^{({n_\D}+\ubar{m})\times n_y}$ as
\begin{align}
	 X^m_t&=X,& Y^m_t&=Y & &\hspace{-0.9cm}\text{if } \overline{m}\!=\!0\vee t\!=\!0\notag\\ 
	 X^m_t&=[X,\z_{t-\overline{m}:t-1}],& Y^m_t&=[Y^\top,\y_{t-\overline{m}+1:t}]^\top & &\text{otherwise}\notag
\end{align}
with the Gram matrix~${K}^m_t\in\R^{({n_\D}+\ubar{m})\times({n_\D}+\ubar{m})}$ as
\begin{align}
	{K}^m_t=&\begin{bmatrix}
		K(X,X)+\sigma_n^2 I & K(\z_{t-\overline{m}:t-1},X)\\
		K(\z_{t-\overline{m}:t-1},X)^\top & K(\z_{t-\overline{m}:t-1},\z_{t-\overline{m}:t-1})
	\end{bmatrix}\label{for:grammnoe}\\
	&\text{if }t>0 \wedge \overline{m}>0 \text{ and } K(X,X)+\sigma_n^2 I \text{ otherwise.}\notag
\end{align}
\begin{defn}
We call~\cref{for:gp_noe_m} a \emph{Gaussian process approximated nonlinear output error} (GP-ANOE) model with maximum memory length~$\overline{m}$.
\end{defn}
Having introduced the formal description for the approximations of the non-Markovian dynamics, we analyze the approximation error in the following.
\subsection{Approximation Error}
In this section, we present the computation of the error between the true state distribution~$\x_{t+1}$ given by~\cref{for:gpssm_pred} and the approximated distribution~$\xkp^m$ based on the maximum length of memory~$\overline{m}$. As the Kullback-Leibler (KL) divergence is an important measure of how one probability distribution differs from a second, we start with the KL divergence of the GP-SSM prediction from the GP-ASSM prediction. For the sake of clarity, we define the following notational simplifications 
\begin{align}
F_t^m&\coloneqq F_t(\vxi_t,\Xi^m_t),& F_t^\infty&\coloneqq F_t(\vxi_t,\Xi^\infty_t)\\
\dyn_t^m&\coloneqq\dyn_t(\vxi_t,\Xi^m_t),& \dyn_t^\infty&\coloneqq\dyn_t(\vxi_t,\Xi^\infty_t)
\end{align}
for the mean and variance given by~\cref{for:gp_ssm_m}.
\begin{prop}\label{prop:KL}
Consider a GP-ASSM with maximum length of memory~$\overline{m}\in\N$ and data set~$\mathcal{D}$ such that
\begin{align}
\x_{t+1}^m&\sim \ND\big(\Mean(\xkp\vert\vxi_t,\Xi^m_t,\mathcal{D}),\Var(\xkp\vert\vxi_t,\Xi^m_t,\mathcal{D})\big)\notag
\end{align}
with~$\x_0\in\R^{n_x}$. For given past states and inputs~$\vxi_{0:t}$, where~$\vxi_t\neq\vxi_0,\ldots,\vxi_{t-1}$, the KL-divergence of the true distribution~$\xkp$ from the approximation~$\xkp^m$ is given by
\begin{align}
	d_{\mathrm{KL}}(\xkp \| \xkp^m)&=\frac{1}{2}\Delta_t^\top [F_t^m]^{-1}\Delta_t-n_x+\tr\left({F_t^\infty}[F_t^m]^{-1}\right)\notag\\
	&+\ln\Big[\tr\left([F_t^\infty]^{-1}{F_t^m}\right)\!\Big]\label{for:KLprop}
\end{align}
with~$\Delta_t={\dyn_t^m}-{\dyn_t^\infty}$.
\end{prop}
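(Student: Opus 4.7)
The plan is to recognise \cref{for:KLprop} as the closed-form KL divergence between two $n_x$-dimensional Gaussians and to obtain it by direct substitution. By \cref{propy:gpssm}, the exact predictive distribution $\xkp\mid\vxi_{0:t},\D$ is $\ND(\dyn_t^\infty,F_t^\infty)$ (the $\overline{m}=\infty$ case already coincides with the non-Markovian prediction, as noted in the remark around \cref{sec3:for:gpassminf}), and by definition~\cref{for:gp_ssm_m} the approximation $\xkp^m$ is $\ND(\dyn_t^m,F_t^m)$. No further probabilistic machinery is required beyond the analytical KL expression for two Gaussians.

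I would then recall the standard formula for the KL divergence of $\ND(\bm\mu_0,\Sigma_0)$ from $\ND(\bm\mu_1,\Sigma_1)$,
\begin{align*}
d_{\mathrm{KL}}\!\big(\ND(\bm\mu_0,\Sigma_0)\,\|\,\ND(\bm\mu_1,\Sigma_1)\big)
=\tfrac{1}{2}\!\left[\tr(\Sigma_1^{-1}\Sigma_0)+(\bm\mu_1-\bm\mu_0)^{\!\top}\Sigma_1^{-1}(\bm\mu_1-\bm\mu_0)-n_x+\ln\tfrac{\det\Sigma_1}{\det\Sigma_0}\right]\!,
\end{align*}
identify $\bm\mu_0=\dyn_t^\infty$, $\bm\mu_1=\dyn_t^m$, $\Sigma_0=F_t^\infty$, $\Sigma_1=F_t^m$, and use the definition $\Delta_t=\dyn_t^m-\dyn_t^\infty$ for the quadratic term. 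A rearrangement of the trace and log-determinant contributions then yields the expression in \cref{for:KLprop}.

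The main point that needs care is well-posedness of this substitution, namely that both $F_t^\infty$ and $F_t^m$ are strictly positive definite so that their inverses exist. Because the past states and inputs $\vxi_{0:t-1}$ are appended as noise-free pseudo-training data to the Gram matrices in \cref{for:gram,for:gramm}, the posterior variance at the query point $\vxi_t$ would collapse to zero exactly when $\vxi_t$ coincides with some $\vxi_j$, $j<t$. The hypothesis $\vxi_t\neq\vxi_0,\ldots,\vxi_{t-1}$ precisely rules this out, guaranteeing invertibility of both covariances and legitimising the direct use of the Gaussian KL formula.
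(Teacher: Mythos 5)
Your proposal takes essentially the same route as the paper: both invoke the standard closed-form KL divergence between two multivariate Gaussians, substitute the GP-SSM prediction $\ND(\dyn_t^\infty,F_t^\infty)$ and the GP-ASSM prediction $\ND(\dyn_t^m,F_t^m)$, and then rearrange (the paper merely inserts an intermediate component-wise form exploiting the diagonality of the covariances before the final simplification). Your explicit justification of invertibility via the hypothesis $\vxi_t\neq\vxi_0,\ldots,\vxi_{t-1}$ is a welcome addition that the paper only supplies informally in the discussion following \cref{cor:KL}.
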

\begin{proof}
 For given past states and inputs~$\vxi_{0:t}$, the next state~$\xkp$ of the GP-SSM and the next state~$\xkp^m$ of the GP-ASSM are Gaussian distributed such that the KL-divergence is given by
\begin{align}
	d_{\mathrm{KL}}(\xkp \| \xkp^m)\!=&\frac{1}{2}\bigg[ \!\tr\big([F_t^m]^{-1}{F_t^\infty}\big)\!+\!\big({\dyn_t^m}\!-\!{\dyn_t^\infty}\big)^\top\! [F_t^m]^{-1}\notag\\
	&\big({\dyn_t^m}\!-\!{\dyn_t^\infty}\big)-{n_x}+\ln\left(\frac{|{F_t^m}|}{|{F_t^\infty}|}\right)\bigg]\label{for:mse0}
\end{align}
using the definition of~$F_t,\dyn_t$ in~\cref{for:gp_ssm_m}. As the variance of each element in~$\xkp$ and~$\xkp^m$ is independent, see~\cref{sec3:for:gpassmmeanvar}, the KL-divergence can be rewritten to
\begin{align}
	d_{\mathrm{KL}}(\xkp \| \xkp^m)&=\frac{1}{2}\sum_{i=1}^{n_x}\bigg[\frac{[{F_t^\infty}]_{i,i}+([{f_t^m}]_i-[{f_t^\infty}]_i)^2}{[{F_t^m}]_{i,i}}\notag\\
	&+\ln\left(\frac{[{F_t^m}]_{i,i}}{[{F_t^\infty}]_{i,i}}\right)-1\bigg].\label{for:KL}
\end{align}
Finally, simplifying~\cref{for:KL} leads to~\cref{for:KLprop}. 
\end{proof}
\Cref{prop:KL} shows that the error is quantified by the drift of mean~$\Mean(\xkp\vert\vxi_t,\Xi^m_t,\mathcal{D})$ and variance~$\Var(\xkp\vert\vxi_t,\Xi^m_t,\mathcal{D})$ with respect to the true distribution. Therefore, depending on the maximum length of memory~$\overline{m}$, the approximation error is zero at the beginning as the following corollary points out.
\begin{cor}\label{cor:KL}
	For all~$t\leq \overline{m}$, the approximated distribution~$p(\xkp\vert\vxi_t,\Xi^m_t,\mathcal{D})$ given by~\cref{for:gp_ssm_m} equals the true distribution given by~\cref{for:gp_ssm} with KL-divergence~$d_{\mathrm{KL}}(\xkp \| \xkp^m)=0$.
\end{cor}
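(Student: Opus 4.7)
The plan is to observe that when $t \leq \overline{m}$, the truncation performed by the approximated model actually throws nothing away, so the two predictive distributions coincide identically; the KL claim then follows immediately from \cref{prop:KL}.

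First I would unpack the definition of the actual memory length. Since $\ubar{m} = \min(t,\overline{m})$, the hypothesis $t \leq \overline{m}$ gives $\ubar{m} = t$. Plugging this into the definition \cref{for:memgpssm} of $\Xi^m_t$ yields $\Xi^m_t = [\vxi_{t-1},\ldots,\vxi_0]$, which is exactly the full history of past regressors used by the non-Markovian prediction in \cref{propy:gpssm}. In particular, the extended data matrices defined in \cref{sec3:for:gpssmindex2} reduce to $X^m_t = [X,\vxi_{0:t-1}]$ and $Y^m_t = [Y^\top,\x_{1:t}]^\top$, which are precisely the matrices $X_t$ and $Y_t$ from \cref{sec3:for:gpssmindex1}.

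Next I would transfer this equality up to the Gram matrices and the predictive moments. Comparing \cref{for:gram} and \cref{for:gramm}, the two block structures are built from the same inputs and therefore $K^m_t = K_t$. Substituting into the mean and variance formulas \cref{for:gpssm_pred} and \cref{sec3:for:gpassmmeanvar} shows element-by-element that $\Mean(\xkp\vert\vxi_t,\Xi^m_t,\mathcal{D}) = \Mean(\xkp\vert\vxi_{0:t},\mathcal{D})$ and $\Var(\xkp\vert\vxi_t,\Xi^m_t,\mathcal{D}) = \Var(\xkp\vert\vxi_{0:t},\mathcal{D})$. In the notation of \cref{prop:KL}, this is the statement $\dyn_t^m = \dyn_t^\infty$ and $F_t^m = F_t^\infty$, hence $\Delta_t = 0$.

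Finally I would invoke \cref{prop:KL}. With $\Delta_t = 0$ the quadratic form vanishes, while $\tr\bigl(F_t^\infty [F_t^m]^{-1}\bigr) = \tr(I) = n_x$ cancels the $-n_x$ term, and $\ln\bigl[\tr\bigl([F_t^\infty]^{-1}F_t^m\bigr)\bigr] = \ln n_x$ can be absorbed by redoing the elementwise computation in \cref{for:KL} (each summand is zero because the two Gaussians coincide), giving $d_{\mathrm{KL}}(\xkp \| \xkp^m) = 0$. I do not expect any real obstacle here; the only subtle point is making sure the column orderings in $\Xi^m_t$ and in $\vxi_{0:t-1}$ are compatible, since a permutation of the rows and columns of $K_t$ does not alter the predictive distribution anyway.
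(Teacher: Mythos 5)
Your proposal is correct and follows essentially the same route as the paper: for $t\leq\overline{m}$ the memory matrices satisfy $\Xi^m_t=\Xi^\infty_t$, so the extended data, Gram matrices, and hence the predictive mean and variance coincide, and the KL-divergence vanishes by the elementwise expression \cref{for:KL} --- which is exactly the formula the paper's proof invokes. Your side observation that the compact form \cref{for:KLprop} would leave a residual $\ln n_x$ is a fair catch of an inconsistency in that display, and your fallback to \cref{for:KL} is precisely how the paper resolves it.
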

\begin{proof}
	The corollary is a direct consequence of~\cref{prop:KL}. If the time step~$t$ is equal to or less than the maximum length of memory~$\overline{m}$, the matrices of past states and inputs of the GP-SSM and the GP-ASSM is identical, i.e.,~$\Xi^m_t=\Xi^\infty_t$, and thus, the mean and variance of the approximated distribution equals the true distribution. In consequence, the KL-divergence is zero given by~\cref{for:KL}.
\end{proof}
The restriction of~\cref{prop:KL} that the current state must not be part of the past states is necessary as otherwise, the variance~$F_t(\vxi_t,\Xi^\infty_t)$ or~${F_t^m}$ would be zero. In~\cref{sec3:exam:approx}, this case is explained as the past states and inputs are added to the extended data set such that the predicted variance becomes zero. Additionally, the asymmetry of the KL divergence might be obstructive in some applications. Therefore, we introduce a different measure for the approximation error, namely the mean square prediction error (MSPE).
\begin{prop}\label{prop:mspe}
Consider a GP-ASSM with maximum memory length~$\overline{m}\in\N$ and data set~$\mathcal{D}$ such that
\begin{align}
\xkp^m&\sim \ND\big(\Mean(\xkp\vert\vxi_t,\Xi^m_t,\mathcal{D}),\Var(\xkp\vert\vxi_t,\Xi^m_t,\mathcal{D})\big)\notag
\end{align}
with~$\x_0\in\R^{n_x}$. For given past states and inputs~$\vxi_{0:t}$, the MSPE between~$\xkp^m$ and~$\xkp$ of the GP-SSM is given by
\begin{align}
	\ev{\bigg[\!\Verts{\xkp-\xkp^m}^2\!\bigg]}&\!=\!\Verts{{\dyn_t^\infty}\!-\!{\dyn_t^m}}+\tr\left({F_t^\infty}+{F_t^m}\right).\label{for:mspe}
\end{align}
\end{prop}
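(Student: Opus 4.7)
The plan is to expand the squared norm $\|\xkp - \xkp^m\|^2$ into three scalar pieces and then apply linearity of expectation together with the Gaussian marginal laws supplied by Proposition~\ref{propy:gpssm} and equation~\eqref{for:gp_ssm_m}. Concretely, I would write
\begin{align*}
\Verts{\xkp - \xkp^m}^2 = \Verts{\xkp}^2 - 2\,\xkp^\top \xkp^m + \Verts{\xkp^m}^2
\end{align*}
and then handle each term separately. For the quadratic terms, the standard identity $\ev[\Verts{Z}^2] = \Verts{\ev[Z]}^2 + \tr(\Var(Z))$ applied to the Gaussian marginals immediately yields $\ev[\Verts{\xkp}^2] = \Verts{\dyn_t^\infty}^2 + \tr(F_t^\infty)$ and $\ev[\Verts{\xkp^m}^2] = \Verts{\dyn_t^m}^2 + \tr(F_t^m)$.

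The cross term is the one that needs the most care. The two samples $\xkp$ and $\xkp^m$ are drawn from the conditional GP predictive distributions based on different conditioning sets ($\vxi_{0:t}$ versus $\vxi_{t-\overline m:t-1}$), so I would argue that under the standard independent-sampling convention already used implicitly in Proposition~\ref{prop:KL}, the two draws are statistically independent given the conditioning data. Under that convention, $\ev[\xkp^\top \xkp^m] = (\dyn_t^\infty)^\top \dyn_t^m$, and assembling the three terms gives
\begin{align*}
\ev\!\left[\Verts{\xkp-\xkp^m}^2\right] = \Verts{\dyn_t^\infty - \dyn_t^m}^2 + \tr(F_t^\infty + F_t^m),
\end{align*}
which matches the claim (modulo the missing square in the stated formula).

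The main obstacle is therefore conceptual rather than computational: I need to make explicit that $\xkp$ and $\xkp^m$ are being treated as independent samples from their respective predictive distributions, because each is generated by its own sampling procedure (full history versus truncated history). Once that is spelled out, the derivation collapses to the standard MSPE formula between two independent Gaussian vectors. I would place a brief sentence at the start of the proof stating this independence assumption (or equivalently, that the expectation is computed with respect to the product of the two predictive Gaussians), and then the rest is a direct calculation requiring only the trace identity for the second moment of a Gaussian.
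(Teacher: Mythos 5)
Your proposal is correct and takes essentially the same route as the paper: the paper's proof likewise reduces the MSPE to the componentwise sum $\sum_{i=1}^{n_x}\big([f_t^\infty]_i-[f_t^m]_i\big)^2+[F_t^\infty]_{i,i}+[F_t^m]_{i,i}$ via the Gaussian second-moment identity, relying on the same (there left implicit) independence of the two draws given the history. Your two side observations are both accurate — the paper never states the independent-sampling convention explicitly, and the right-hand side of \cref{for:mspe} is indeed missing the square on $\Verts{\dyn_t^\infty-\dyn_t^m}$, as the paper's own derivation confirms.
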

\begin{proof}
Since each element of~$\xkp$ and~$\xkp^m$ with a given history of past states and inputs~$\vxi_{0:t}$ is Gaussian distributed, the MSPE is defined by
\begin{align}
	&\ev{\bigg[\Verts{\xkp-\xkp^m}^2\bigg]}=\sum_{i=1}^{n_x}  \ev{\big[(x_{t+1,i}-x_{t+1,i}^m)^2\big]}\label{sec3:for:mspeproof}\\
	&=\sum_{i=1}^{n_x}\big([{f_t^\infty}]_i-[{f_t^m}]_i\big)^2+[{F_t^\infty}]_{i,i}+[{F_t^m}]_{i,i}.\notag
\end{align}
\Cref{sec3:for:mspeproof} is then rewritten to~\cref{for:mspe}.
\end{proof}
With~\cref{prop:mspe,prop:KL} the error of the approximation can be computed. Even if the error measures do not decrease in general for increasing maximum length of memory~$\overline{m}$, the behavior of the variance can be quantified. The next proposition allows to overestimate the predicted variance based on the maximum length of memory.
\begin{prop}\label{prop:var}
	Consider two GP-ASSMs with states and inputs~$\vxi_{0:t}\in\R^{n_\vxi}$ with~$\vxi_{0}\neq\vxi_{1}\neq\ldots\neq\vxi_{t}$ such that $\xkp^m\sim \ND\big(\dyn_t(\vxi_t,\Xi^m_t),F_t(\vxi_t,\Xi^m_t)\big)$ and $\xkp^{{m^\prime}}\sim \ND\big(\dyn_t(\vxi_t,\Xi^{{m^\prime}}_t),F_t(\vxi_t,\Xi^{{m^\prime}}_t)\big)$,
	 where~$\overline{m}$ and~$\overline{m}^\prime$ are the maximum length of memory, respectively. Then, for~$\overline{m}^\prime>\overline{m}$
	\begin{align}
		\tr\left(\Var(\xkp^{{m^\prime}}\vert\vxi_t,\Xi^{{m^\prime}}_t,\mathcal{D})\right)<\tr\left(\Var(\xkp^{m}\vert\vxi_t,\Xi^{m}_t,\mathcal{D})\right)\label{for:trace}
	\end{align}
	holds for all~$t\in\N$ with~$t>\overline{m}$.
\end{prop}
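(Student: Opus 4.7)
The strategy rests on the standard fact that adding further ``noise-free training data'' to a GP can only decrease the posterior variance at any test point, and strictly so when the new data genuinely enrich the RKHS-span of the representers. Here the two GP-ASSMs differ precisely in how many past states they treat as such data.

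First, I would fix an output dimension $i\in\{1,\ldots,n_x\}$ and note that for $t>\overline{m}$, by~\cref{sec3:for:gpssmindex2}, the larger memory's extended input matrix is simply the smaller one augmented with the intermediate past states, i.e.\ $X_t^{m'}=[X_t^m,\,\vxi_{t-\overline{m}'},\ldots,\vxi_{t-\overline{m}-1}]$. This induces the block partition
\begin{align*}
K_t^{m'}=\begin{bmatrix} K_t^m & A \\ A^\top & B \end{bmatrix},\qquad \bm{k}(\vxi_t,X_t^{m'})=\begin{bmatrix}\bm{k}(\vxi_t,X_t^m)\\ \bm{c}\end{bmatrix},
\end{align*}
where $A$, $B$, $\bm{c}$ collect the kernel evaluations involving the newly added past states.

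Second, I would apply the block-matrix inversion identity with Schur complement $S=B-A^\top (K_t^m)^{-1}A$. Since $K_t^{m'}$ is positive definite, so is $S$. Plugging into the variance formula in~\cref{sec3:for:gpassmmeanvar} and simplifying yields
\begin{align*}
[F_t^{m'}]_{i,i}=[F_t^m]_{i,i}-\bm{r}^\top S^{-1}\bm{r},\qquad \bm{r}=\bm{c}-A^\top (K_t^m)^{-1}\bm{k}(\vxi_t,X_t^m).
\end{align*}
This already gives the non-strict inequality $[F_t^{m'}]_{i,i}\le[F_t^m]_{i,i}$.

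Third, to upgrade to a strict inequality, I would show $\bm{r}\neq 0$ using the distinctness hypothesis $\vxi_0\neq\vxi_1\neq\cdots\neq\vxi_t$. The vector $\bm{r}$ is the Schur-complement ``residual'' of the column $\bm{k}(\vxi_t,X_t^{m'})$ with respect to the top block; equivalently it is the RKHS orthogonal-projection residual of the representer $k(\cdot,\vxi_t)$ onto the span of the representers at the current training inputs. If $\bm r$ vanished, $k(\cdot,\vxi_t)$ would be a linear combination of $\{k(\cdot,x)\}_{x\in X_t^m}$, which the positive definiteness of the kernel on the pairwise distinct inputs in $X_t^{m'}\cup\{\vxi_t\}$ forbids. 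Hence $\bm r^\top S^{-1}\bm r>0$, giving $[F_t^{m'}]_{i,i}<[F_t^m]_{i,i}$. Summing over $i=1,\ldots,n_x$ gives~\cref{for:trace}.

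The main obstacle is the strict-positivity step: the block-inversion computation is routine, but pinning down $\bm r\neq 0$ requires that the kernel be strictly positive definite on distinct inputs (as is the squared-exponential kernel used in~\cref{sec3:exam:approx}). With that mild assumption folded into the distinctness hypothesis, the rest of the argument is just Schur-complement algebra.
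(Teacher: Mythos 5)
Your proof follows essentially the same route as the paper's: both exploit that $K_t^{m^\prime}$ contains $K_t^m$ as a principal sub-block and that conditioning on the additional past states can only reduce the posterior variance at $\vxi_t$; you simply make explicit, via the Schur complement, the identity $[F_t^{m^\prime}]_{i,i}=[F_t^m]_{i,i}-\bm{r}^\top S^{-1}\bm{r}$ that the paper leaves implicit when it asserts the strict inequality directly from positive definiteness of $K_t^{m^\prime}$. The one genuinely shaky step is your justification of strictness: $\bm{r}=0$ does \emph{not} imply that $k(\cdot,\vxi_t)$ lies in the span of the representers at $X_t^m$. The vector $\bm{r}$ is the posterior cross-covariance between $f$ at the newly added states and $f(\vxi_t)$ after conditioning on the (noisy) data at $X_t^m$; strict positive definiteness of the kernel on distinct inputs only guarantees that the joint posterior covariance of these quantities is positive definite, and a positive definite matrix can perfectly well have a vanishing off-diagonal block (for instance, a compactly supported strictly positive definite kernel with the added states far from both $\vxi_t$ and the rest of the data gives $\bm{r}=0$ exactly, so the inequality is then only non-strict). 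Strictness therefore requires a further condition on the kernel beyond distinctness of the inputs. To be fair, the paper's own proof glosses over precisely the same point, deducing the strict inequality from positive definiteness of $K_t^{m^\prime}$ and $\overline{m}^\prime>\overline{m}$ alone, so your write-up is, if anything, the more transparent of the two about where the difficulty sits.
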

\begin{proof}
Following~\cref{for:gpssm_pred}, the variance for each component of the predicted state of a GP-ASSM is given by
\begin{align}
\var(x_{t+1,i}^{m}\vert\vxi_t,\Xi^{m}_t,\mathcal{D})=\,&k(\vxi_t,\vxi_t)-\bm{k}(\vxi_t, X^m_t)^\top {({K}_t^m)}^{-1}\notag\\
&\bm{k}(\vxi_t, X^m_t).\label{for:varproof}
\end{align}
The Gram matrix~${K}_t^m$ is positive definite and from~\cref{for:gramm} we know, that its dimension is~$({n_\D}+\ubar{m})\times({n_\D}+\ubar{m})$. Based on~${K}_t^m$, the Gram matrix~${K}_t^{m^\prime}\in\R^{({n_\D}+\ubar{m}^\prime)\times({n_\D}+\ubar{m}^\prime)}$ is determined as
\begin{align}
{K}_t^{{m^\prime}}\!\!\!=\!\!\begin{bmatrix}
		\!K(\vxi_{t-\overline{m}^\prime:t-\overline{m}-1},\vxi_{t-\overline{m}^\prime:t-\overline{m}-1})\! & \!\!\!K(\vxi_{t-\overline{m}^\prime:t-\overline{m}-1},X)\!\\
		K(\vxi_{t-\overline{m}^\prime:t-\overline{m}-1},X)^\top & {K}_t^{m}
	\end{bmatrix}\!.\notag
\end{align}
Since the~${K}_t^{m^\prime}$ is positive definite and~$\overline{m}^\prime>\overline{m}$, the inequality
\begin{align}
&k(\vxi_t,\vxi_t)-\bm{k}(\vxi_t, X^{m^\prime}_t)^\top {({K}_t^{m^\prime})}^{-1} \bm{k}(\vxi_t, X^{m^\prime}_t)\notag\\
<&k(\vxi_t,\vxi_t)-\bm{k}(\vxi_t, X^m_t)^\top {({K}_t^m)}^{-1}\bm{k}(\vxi_t, X^m_t)\notag\\
\Rightarrow &\var(x_{t+1,i}^{m^\prime}\vert\vxi_t,\Xi^{m^\prime}_t,\mathcal{D})<\var(x_{t+1,i}^m\vert\vxi_t,\Xi^m_t,\mathcal{D})\label{sec3:for:varineq}
\end{align}
holds for all~$t\in\N$ with~$t>\overline{m}$. Summing up~\cref{sec3:for:varineq} over all elements of~$\x_{t+1}$ leads to~\cref{for:trace}.
\end{proof}
\Cref{prop:var} verifies that the variance of the distribution for the next state ahead~$\xkp^{m^\prime}$ is less than the variance of~$\xkp^m$ with a shorter actual length of memory. This induces that the variance is the lowest for the true sampling as it is given for~$\overline{m}=\infty$. The restriction~$t>m$ in~\cref{prop:var} is necessary as otherwise the variances would be equal for~$t\leq \overline{m}$ as explained in~\cref{cor:KL}. The inequality of past states is necessary to ensure that the GP-ASSM with maximum length of memory~${\overline{m}^\prime}$ contains not only a multiple of the same states which would not decrease the variance. For the sake of completeness, a weaker description for all~$t\in\N$ is provided by the following corollary.
\begin{cor}\label{cor:var}
	Consider two GP-ASSMs with states and inputs~$\vxi_{0:t}\in\R^{n_\vxi}$ such that $\xkp^m\sim \ND\big(\dyn_t(\vxi_t,\Xi^m_t),F_t(\vxi_t,\Xi^m_t)\big)$ and $\xkp^{{m^\prime}}\sim \ND\big(\dyn_t(\vxi_t,\Xi^{{m^\prime}}_t),F_t(\vxi_t,\Xi^{{m^\prime}}_t)\big)$, where the indices~$\overline{m}$ and~$\overline{m}^\prime$ denote the maximum length of memory, respectively. Then, for~$\overline{m}^\prime>\overline{m}$, $\tr\big[\Var(\xkp^{{m^\prime}}\vert\vxi_t,\Xi^{{m^\prime}}_t,\mathcal{D})\big]\leq\tr\big[\Var(\xkp^{m}\vert\vxi_t,\Xi^{m}_t,\mathcal{D})\big]$ holds for all~$t\in\N$.
\end{cor}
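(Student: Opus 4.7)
The plan is to derive \cref{cor:var} by extending \cref{prop:var} to cover the two edge cases it excludes: the early regime $t \le \overline{m}$, and trajectories where past states coincide.

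First, I would dispose of the case $t \le \overline{m}$. Since $\ubar{m} = \min(t,\overline{m}) = t$ and, because $\overline{m}^\prime > \overline{m} \ge t$, also $\ubar{m}^\prime = t$, the two memory matrices $\Xi^m_t$ and $\Xi^{m^\prime}_t$ consist of exactly the same columns $\vxi_{t-1},\ldots,\vxi_0$. Consequently the extended input matrices $X^m_t$ and $X^{m^\prime}_t$ coincide, the Gram matrices in \cref{for:gramm} agree, and the predictive covariances defined in \cref{sec3:for:gpassmmeanvar} are identical. So \cref{for:trace} holds with equality in this regime.

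Second, for $t > \overline{m}$, \cref{prop:var} already delivers the strict inequality whenever the past states $\vxi_{0:t}$ are pairwise distinct. What remains is the case in which some of the extra states stored by the longer-memory model duplicate states that are already columns of $X^m_t$. In that situation $K^{m^\prime}_t$ is singular and the Moore--Penrose pseudoinverse of Remark~1 is used. The key observation is that a duplicated column contributes nothing new to the GP conditioning: writing $K^{m^\prime}_t$ in the block form that appears in the proof of \cref{prop:var} and taking the generalized Schur complement, every linearly dependent column yields a zero block, so the quadratic form $\bm{k}(\vxi_t,X^{m^\prime}_t)^{\top}(K^{m^\prime}_t)^{+}\bm{k}(\vxi_t,X^{m^\prime}_t)$ collapses onto $\bm{k}(\vxi_t,X^{m}_t)^{\top}(K^m_t)^{-1}\bm{k}(\vxi_t,X^{m}_t)$ associated with the reduced, distinct subset of states. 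Summing over the $n_x$ output dimensions then yields $\tr\bigl(\Var(\xkp^{m^\prime}\vert\vxi_t,\Xi^{m^\prime}_t,\mathcal{D})\bigr) \le \tr\bigl(\Var(\xkp^{m}\vert\vxi_t,\Xi^{m}_t,\mathcal{D})\bigr)$, where the inequality reduces to an equality precisely when every extra state beyond $\Xi^m_t$ is already represented.

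The main obstacle is making the redundant-column argument rigorous without the invertibility hypothesis used in \cref{prop:var}. If the pseudoinverse bookkeeping becomes cumbersome, a cleaner fallback is a perturbation argument: replace each duplicated state by a nearby distinct value $\vxi + \varepsilon \bm{\delta}$, apply \cref{prop:var} to the perturbed trajectory to obtain a strict inequality for every $\varepsilon > 0$, and let $\varepsilon \to 0$, exploiting continuity of the posterior variance as a rational function of the entries of the kernel matrices wherever the latter remain positive definite. Either route closes the gap between \cref{prop:var} and the statement of \cref{cor:var} for all $t \in \N$.
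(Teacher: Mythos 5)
Your proposal follows essentially the same route as the paper: for $t\le\overline{m}$ the two memories coincide so the variances are equal (this is exactly the paper's appeal to \cref{cor:KL}), and for $t>\overline{m}$ the strict inequality of \cref{prop:var} gives the non-strict one. The only difference is that you additionally (and correctly) handle the case of coinciding past states via the pseudoinverse/perturbation argument --- a case the corollary's statement genuinely admits, since it drops the distinctness hypothesis of \cref{prop:var}, but which the paper's one-line proof passes over in silence; your version is therefore slightly more complete than the published one.
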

\begin{proof}
	The corollary is a direct consequence of~\cref{prop:var} since as long as the current time step~$t$ is less than the maximum length of memory~$\overline{m}$, the variance of~$\xkp^m$ and~$\xkp^{{m^\prime}}$ is identical as shown in~\cref{cor:KL}.
\end{proof}
In the next example, a comparison of the presented error measures and the behavior of the variance is presented.
\begin{exam}
\label{sec3:exam:kl}
In~\cref{fig:kl}, the distributions (gray shaded) for the next state ahead~$x_{t+1}^m$ depending on the maximum length of memory~$\overline{m}$ for a given trajectory~$x_0,\ldots,x_3$ (red circles) is shown. We use here a one-dimensional GP-ASSM with squared exponential function. For sake of simplicity, the input is set to zero, i.e.,~$u_t=0$ for all~$t\in\N$. With increasing maximum length of memory~$\overline{m}$, the variance of the distributions (gray shaded) decreases as stated in~\cref{prop:var}. For~$\overline{m}=3$, the distribution is equal to the true distribution as stated in~\cref{cor:KL}.~\Cref{tab:comp} shows the computed KL-divergence, the MSPE and the variance of~$x_4^m$ per maximum length of memory~$\overline{m}$. 
	\begin{center}
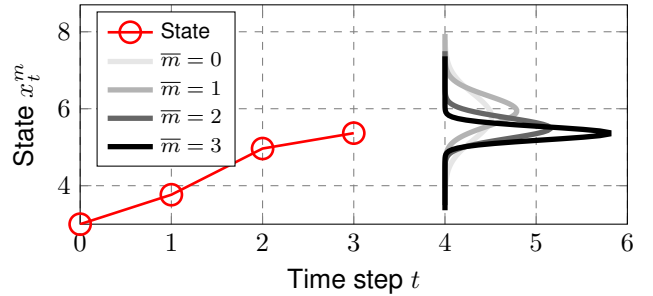

		\tikzsetnextfilename{section3_KL}
		\captionsetup{type=figure}\begin{tikzpicture}
\begin{axis}[
    name=plot1,
  xlabel={Time step $t$},
  ylabel={State $x_t^m$},
  legend pos=north west,
  grid style={dashed,gray},
  grid = both,
  width=\columnwidth,
  height=4.5cm,
  ymin=3,
  ymax=8.7,
  xmin=0,
  xmax=6,
  font={\sffamily},
  legend style={font=\footnotesize\sffamily},
  legend cell align={left},
  legend pos={north west}]
\addplot[color=red,line width=1pt,mark=o,mark size=4] table [x index=0,y index=1]{data/KL_states.dat};
\addplot[color=black!10,line width=2pt] table [x index=0,y index=1]{data/KL_dist.dat};
\addplot[color=black!30,line width=2pt] table [x index=2,y index=3]{data/KL_dist.dat};
\addplot[color=black!60,line width=2pt] table [x index=4,y index=5]{data/KL_dist.dat};
\addplot[color=black!100,line width=2pt] table [x index=6,y index=7]{data/KL_dist.dat};
\legend{State,$\overline{m}=0$,$\overline{m}=1$,$\overline{m}=2$,$\overline{m}=3$}
\end{axis}
\end{tikzpicture} 
		\vspace{-0.2cm}\captionof{figure}{The distribution for the next state ahead~$x_t^m$ depending on the maximum length of memory~$\overline{m}$.}
		\label{fig:kl}
	\end{center}
\begin{center}
\vspace{0.2cm}
\captionsetup{type=table}\begin{tabular}{lllll}
	\toprule
			&	$\overline{m}=0$		&	$\overline{m}=1$	&$\overline{m}=2$ &$\overline{m}=3$\\
			\midrule
	$d_{\mathrm{KL}}(x_4\|x_4^m)$	&	2.1131  &  3.0811  &  0.5559     &    0\\
	$\mspe(x_4,x_4^m)$ & 0.5720   & 0.5190  &  0.1171  &  0.0575\\
	$\Var(x_4^{m}\vert\vxi_3,\Xi^{m}_3,\mathcal{D})$ & 0.3620  &  0.1519  & 0.0706  &  0.0288\\
    \bottomrule
\end{tabular}
\captionof{table}{Comparison of the KL-divergence, MSPE and variance~$\Sigma$ for GP-ASSMs with different lengths of memory.\label{tab:comp}}
\end{center}
\end{exam}
So far, we obtain a method for sampling from the non-Markovian GP-SSM and introduce the approximated GP-ASSM which is a Markov chain of finite order. This approximation allows to use GP-ASSMs like parametric dynamical models since the state dependencies across time are removed. The approximation error is analyzed based on different measures and illustrated in~\cref{sec3:exam:kl}. 
\begin{rem}
This section focuses on the formal development of GP-ASSMs, but the results are also directly applicable to GP-ANOE models. In this case, the proofs are analogously but with the output~$\yk$ as regressor.
\end{rem}
%%%%%%%%%%%%%%%%%%%%%%%%%%%%%%%%%%%%%%%%%%%%%%%%%%%%%%%%%%%%%%%%%%%%%%%%%%%%%%%%%%%%%%%%%%%%%%%%%%%%
%%%%%%%%%%%%%%%%%%%%%%%%%%%%%%%%%%%%%%%%%%%%%%%%%%%%%%%%%%%%%%%%%%%%%%%%%%%%%%%%%%%%%%%%%%%%%%%%%%%%
\section{Boundedness of GPDMs}
\label{sec:bound}
After the introduction of GP-SSMs and GP-ASSMs, the models are analyzed in terms of boundedness. Furthermore, the relation of the boundedness properties between the true and the approximated distribution are investigated.
\subsection{GP State Space Models}
\label{sec:gpsmm}
We start with the general introduction of the boundedness of GP-ASSMs for bounded mean functions and kernels.
\begin{thm}\label{thm:1}
     Consider a GP-ASSM~\cref{for:gp_ssm_m} with maximum memory length~$\overline{m}$, bounded mean $\vert m(\x)\vert\leq m_{max}\in\R_{+}$ and kernel function~$k(\x,\x^\prime)\leq k_{max}\in\R_{+}$ for all $\x,\x^\prime\in\R^{n_\xi}$. Then, for every $\xk^m\in\R^{n_x}$, the state~$\xk^m\in\R^{n_x},t\in\N_{+}$ is ultimately \emph{p-bounded} by
    \begin{align}
        \sup_{t\in\N_{+}}\ev \Vert \x_t\Vert^p&\leq n_x\left(\frac{c_2}{2\pi}\right)^{\frac{1}{2}} \!\int\limits_{\R} \vert z^p \vert  \exp\! \left(-\frac{1}{2}\Vert c_1-z \Vert^2 c_2 \right)\! d z\notag\\
        c_1&=m_{max}+n_\D k_{max} \max_i\Vert (K+\sigma_n I)^{-1}Y_{:,i}\Vert\notag
    \end{align}
and $c_2=k_{max}-\frac{k_{max}^2}{k_{max}+\sigma_n^2}$ for all~$p\in\N_+$.
\end{thm}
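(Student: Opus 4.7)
The plan is to exploit the fact that, for each $t\geq 1$ and each output dimension $i$, the random variable $x_{t+1,i}^m$ is conditionally Gaussian given the past trajectory $\vxi_{0:t}$, with moments given by~\cref{sec3:for:gpassmmeanvar}. If I can uniformly (in $t$ and in the realized history) bound the conditional mean in absolute value by $c_1$ and the conditional variance by $1/c_2$, then the $p$-th absolute moment of $x_{t+1,i}^m$ is dominated by that of a $\mathcal N(c_1,1/c_2)$ random variable. Invoking the tower property of expectation and summing over $i=1,\ldots,n_x$ will then yield the asserted bound.

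The first step is the mean bound. Splitting $f_t(\vxi_t,\Xi^m_t)_i = m(\vxi_t) + \bm k(\vxi_t,X^m_t)^\top (K^m_t)^{-1}([Y^m_t]_{:,i} - \bm m(X^m_t))$ via the triangle inequality isolates the contribution $\vert m(\vxi_t)\vert \leq m_{\max}$. For the residual term, I would bound $\bm k$ entrywise using $k(\cdot,\cdot)\leq k_{\max}$ and then use a block decomposition of $K^m_t$ so that the noise-free past states, which the posterior mean interpolates exactly at their own locations, do not inflate the prediction beyond what the original training set $\{X,Y\}$ already contributes. What should remain is the expression $n_\D\, k_{\max} \max_i \Vert (K+\sigma_n I)^{-1} Y_{:,i}\Vert$, reproducing the second term of $c_1$.

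The second step is the variance bound. Using the standard monotonicity fact that conditioning on additional noise-free observations can only decrease the posterior variance, I would obtain $F_t(\vxi_t,\Xi^m_t)_{i,i}\leq k(\vxi_t,\vxi_t) - \bm k(\vxi_t,X)^\top(K+\sigma_n^2 I)^{-1}\bm k(\vxi_t,X)$, and then maximize the right-hand side under the kernel bound to recover the explicit constant matching $1/c_2$. With both the mean and variance bounds in hand, monotonicity of $\ev\vert Z\vert^p$ in both $\vert\mu\vert$ and $\sigma^2$ for $Z\sim\mathcal N(\mu,\sigma^2)$ delivers the componentwise moment bound, and summing over the $n_x$ components completes the proof.

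The main obstacle is the mean bound: obtaining a constant that depends only on the original training set size $n_\D$ despite $(K^m_t)^{-1}$ being taken over an extended training matrix whose size grows with $t$. The block-matrix calculus needed to show that the noise-free past-state contributions to $(K^m_t)^{-1}([Y^m_t]_{:,i} - \bm m(X^m_t))$ do not inflate the mean prediction is the technical crux; the variance bound and the final moment computation are comparatively standard once this is established.
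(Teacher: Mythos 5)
Your overall strategy --- condition on the realized history, bound the one-step conditional mean and variance uniformly over all histories, dominate the conditional $p$-th moment by that of a fixed Gaussian, and close with the tower property --- is genuinely different from the paper's proof, which instead writes the joint density $p(\x_{1:t}\vert\inputu_{0:t},\D)$ of the entire trajectory conditioned \emph{only} on the training data, bounds that density pointwise by a product of one-dimensional Gaussians with mean $c_1$ and precision $c_2$, and then integrates $\vert x_t\vert^p$ against the factorized bound. This difference is not cosmetic: it is precisely what lets the paper's argument go through and what breaks yours.

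The gap is in your mean bound, which you correctly identify as the crux but which cannot be repaired in the form you propose. The one-step conditional mean $f_t(\vxi_t,\Xi^m_t)_i$ depends on the \emph{realized} past states through $[Y^m_t]_{:,i}$, built from $\x_{t-\overline{m}+1:t}$, and these are conditionally Gaussian, hence unbounded over possible histories; the theorem's hypotheses admit bounded kernels with non-decaying correlation (e.g., periodic kernels), for which a large realized $\x_t$ drags the posterior mean at $\vxi_t$ along with it. Moreover the block $K(\vxi_{t-\overline{m}:t-1},\vxi_{t-\overline{m}:t-1})$ of noise-free pseudo-observations can be arbitrarily ill-conditioned (two nearly coincident past inputs with order-one different successor states force the exact interpolant to overshoot), so $(K^m_t)^{-1}\big([Y^m_t]_{:,i}-\bm{m}(X^m_t)\big)$ is not controlled by $n_\D$ and the fixed $Y$ alone. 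Consequently there is no uniform bound $\vert f_t(\vxi_t,\Xi^m_t)_i\vert\leq c_1$ with $c_1$ depending only on the original training set, and the tower property then only yields a history-dependent, unbounded centering. The paper never needs such a bound: in its joint-density formulation the mean vector $\check{\bm{m}}_{0:t-1}$ of~\cref{for:proofM} is the posterior mean given $\D$ alone, which involves only the fixed outputs $Y$ and is genuinely bounded by $c_1$, while the dependence on past states is absorbed into the covariance $\check{K}_t$ and handled by eigenvalue bounds. A secondary mismatch: your variance step yields $F_t(\vxi_t,\Xi^m_t)_{i,i}\leq k_{max}$, which does not in general recover $1/c_2=(k_{max}+\sigma_n^2)/(k_{max}\sigma_n^2)$ --- this can be strictly smaller than $k_{max}$ --- so even the constants in the final display would not come out as stated.
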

\begin{proof}
    We start with the computation of the expected value for a one-dimensional GP-SSM, which equals a GP-ASSM with~$\overline{m}=\infty$, as for any other~$\overline{m}$ the number of considered past states is reduced. For this purpose, we first recall the joint probability distribution of a GP-SSM given by $p(\x_{1:t}\vert\inputu_{0:t},\D)=\left|(2 \pi)^{t} \check{K}\right|^{-\frac{1}{2}}\exp \left(\!-\frac{1}{2}\left(\x_{1 : t}-\check{\bm{m}}_{0 : t-1}\right) \check{K}^{-1}\left(\x_{1 : t}-\check{\bm{m}}_{0 : t-1}\right)^\top\!\right)$ with the conditional covariance matrix $\check{K}_t\in\R^{t\times t}$
    \begin{align}
    		\check{K}_t=&K(\vxi_{0:t-1},\vxi_{0:t-1})-K(\vxi_{0:t-1},X)^\top\big(K+\sigma_n^2 I\big)^{-1}\notag\\
    		&K(\vxi_{0:t-1},X). \label{for:proofK}
    	\end{align}
    	The elements of the mean vector~$\check{\bm{m}}_{0 : t-1}\in\R^{1\times t}$ are
    	\begin{align}
    		\check{m}_i=m(\vxi_i)+K(\vxi_i,X)^\top(K+\sigma_n^2 I)^{-1}(Y-\bm{m}(X))\label{for:proofM}
    	\end{align}
    	for all~$i=\{0,\ldots,t-1\}$ with mean vector~$\bm{m}(X)=[m(X_1),\ldots,m(X_{n_\D})]^\top$. Then, the $p$-th absolute expected value is given by
    \begin{align}
        &\sup_{t\in\N_{+}}\ev \vert x_t\vert^p=  \sup_{t\in\N_{+}}\int\limits_{\R^t}  \vert x_t^p\vert p(\x_{1:t}\vert\inputu_{0:t},\D) d \x_{1:t}\label{for:proofest}\\
        &=\sup_{t\in\N_{+}} \int\limits_{\R^t} \vert x_t^p \vert \left|(2 \pi)^{t} \check{K}_t\right|^{-\frac{1}{2}}  \exp\! \left(\!-\frac{1}{2}\Mean_t^{\top} \check{K}_t^{-1}\Mean_t\!\right)\!d \x_{1:t}\notag
    \end{align}
    with~$\Mean_t=\x_{1 : t}-{\bm{m}}_{0 : t-1}$. Note that the mean $\Mean_t$ and the covariance $\check{K}_t$ are depend on the past states and inputs. Thus, the joint distribution is not a multivariate Gaussian distribution such that there exists no analytical solution for \cref{for:proofest} in general. However, we exploit the Gaussian like structure of the distribution to find an upper bound for the integral. First, the matrix~$\check{K}_t$ is positive definite, and its largest eigenvalue $\geig(\check{K}_t)$ is lower bounded by $k_{max}-k_{max}^2/(k_{max}+\sigma_n^2)\leq\geig(\check{K}_t)$ for all~$\vxi_{0:t-1}\in\R^{n_\xi \times t}$ using the Courant-Fischer Theorem. The variable $\sigma_n^2$ is the variance of the noise that corrupts the training data. Second, the elements $\check{m}_i$ of the mean vector are bounded by $\vert \check{m}_i \vert \leq m_{max}+n_\D k_{max}\Vert (K+\sigma_n I)^{-1}Y\Vert$ for bounded mean functions, see~\cite{beckers:ecc2016}. These bounds leads to the upper bound of the expected value given by
    \begin{align}
        &\sup_{t\in\N_{+}}\ev \vert x_t\vert^p\leq \sup_{t\in\N_{+}}\int\limits_{\R^t} \vert x_t^p \vert (2 \pi)^{-\frac{t}{2}}( k_{max}-\frac{k_{max}^2}{k_{max}+\sigma_n^2})^{\frac{t}{2}}\notag\\
        &\exp\! \left(\!-\frac{1}{2}(\bm{c}_t-\x_{1:t})( k_{max}-\frac{k_{max}^2}{k_{max}+\sigma_n^2})(\bm{c}_t-\x_{1:t})^\top\!\right)\!d \x_{1:t}\notag
    \end{align}
    where the elements of the vector $\bm{c}_t\in\R^{1\times t}$ are $c_{t,i}=m_{max}+n_\D k_{max}\Vert (K+\sigma_n I)^{-1}Y\Vert$. Finally, the upper bound can be simplified as the components of the integral are independent such that
    \begin{align}
        \sup_{t\in\N_{+}}\ev \vert x_t\vert^p&\leq\!\left(\frac{c_2}{2\pi}\right)^{\frac{1}{2}} \!\!\int\limits_{\R} \vert z^p \vert  \exp\! \left(\!-\frac{1}{2}\Vert c_1-z \Vert^2 c_2 \!\right)\! d z\label{sec3:for:boundedfinal}
    \end{align}
    with $c_1=m_{max}+n_\D k_{max}\Vert (K+\sigma_n I)^{-1}Y\Vert$ and $c_2=k_{max}-k_{max}^2/(k_{max}+\sigma_n^2)$. As a $n_x$-dimensional GP-SSM depends on separated GPs and the Gram matrix~${K}$ remains bounded, ~\Cref{sec3:for:boundedfinal} can be extended to higher-dimensional~$\xk\in\R^{n_x}$. Consequently, the upper bound in \cref{thm:1} holds. Finally, this remains obviously true for GP-ASSMs with~$\overline{m}<\infty$ as only a subset of past states is considered. This concludes the proof.
\end{proof}
\begin{rem}
    Many commonly used kernels for GPDMs are bounded, for instance, the squared exponential or Mat\'ern kernel.
\end{rem}
As no boundedness of the input~$\inputu_t$ is required for~\cref{thm:1}, we can derive the following corollary for the boundedness of a closed-loop with a GP-ASSM.
\begin{cor}\label{prop:1}
    Consider a GP-ASSM~\cref{for:gp_ssm_m} with maximum memory length~$\overline{m}$, bounded mean $\vert m(\x)\vert\leq m_{max}\in\R_{+}$ and kernel function~$k(\x,\x^\prime)\leq k_{max}\in\R_{+}$ for all $\x,\x^\prime\in\R^{n_\xi}$. A state feedback law is applied such that $\inputu_t=\bm{g}(\x_t)$ with $g\colon\R^{n_x}\to\R^{n_u}$. Then for every $\xk^m\in\R^{n_x}$, the state~$\xk^m\in\R^{n_x},t\in\N_{+}$ of the closed-loop is ultimately \emph{p-bounded} by
    \begin{align}
        \sup_{t\in\N_{+}}\ev \Vert \x_t\Vert^p&\leq n_x\left(\frac{c_2}{2\pi}\right)^{\frac{1}{2}} \!\int\limits_{\R} \vert z^p \vert  \exp\! \left(-\frac{1}{2}\Vert c_1-z \Vert^2 c_2 \right)\! d z\notag\\
        c_1&=m_{max}+n_\D k_{max} \max_i\Vert (K+\sigma_n I)^{-1}Y_{:,i}\Vert\notag
    \end{align}
    and $c_2=k_{max}-\frac{k_{max}^2}{k_{max}+\sigma_n^2}$ for all~$p\in\N_+$.
\end{cor}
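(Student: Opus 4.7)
The plan is to observe that \cref{prop:1} follows almost immediately from \cref{thm:1} because the proof of \cref{thm:1} never exploits any property of the input sequence $\inputu_0, \inputu_1, \ldots$ beyond the fact that the resulting extended inputs $\vxi_t = [\x_t; \inputu_t]$ lie in $\R^{n_\xi}$, which is exactly where the boundedness hypotheses on $m$ and $k$ hold. In other words, the open-loop bound established in \cref{thm:1} is already uniform over all admissible input sequences, so closing the loop with any measurable feedback $\bm{g}$ is permissible.

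Concretely, I would first revisit the proof of \cref{thm:1} and isolate the two places where the input sequence enters the estimate: the elements $\check{m}_i$ of the conditional mean vector in~\cref{for:proofM} and the conditional covariance matrix $\check{K}_t$ in~\cref{for:proofK}. Both quantities are evaluated at points $\vxi_i \in \R^{n_\xi}$, and the subsequent estimates — the bound $\vert \check{m}_i\vert \leq m_{max} + n_\D k_{max}\Vert (K+\sigma_n I)^{-1} Y\Vert$ and the Courant--Fischer lower bound $\geig(\check{K}_t) \geq k_{max} - k_{max}^2/(k_{max}+\sigma_n^2)$ — use only the hypotheses $\vert m(\x)\vert \leq m_{max}$ and $k(\x,\x') \leq k_{max}$ on all of $\R^{n_\xi}$. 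No structural assumption on how $\inputu_i$ depends on past data is used.

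Second, I would substitute the feedback law $\inputu_t = \bm{g}(\x_t)$, so that $\vxi_t = [\x_t^\top, \bm{g}(\x_t)^\top]^\top \in \R^{n_\xi}$. Because $\vxi_t$ still belongs to $\R^{n_\xi}$, the two estimates above apply verbatim with the same constants. The remainder of the argument in \cref{thm:1} — bounding the joint density by a Gaussian-like envelope, factorizing over the $t$ coordinates, and reducing to a one-dimensional integral in~\cref{sec3:for:boundedfinal} — is unchanged, and the extension from a scalar GP-SSM to the $n_x$-dimensional case via the separated output GPs carries through as well. Finally, the same observation that restricting the memory to $\overline{m}<\infty$ only removes summands from $\check{K}_t$ and $\check{\bm{m}}$ yields the result for general GP-ASSMs.

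I do not expect any substantive obstacle: once one recognizes that the bound in \cref{thm:1} is uniform in the input, the corollary reduces to a one-line appeal to that proof. The only minor technical point worth stating is that $\bm{g}$ must be such that $\vxi_t \in \R^{n_\xi}$ is well defined at every step (any measurable $\bm{g}\colon\R^{n_x}\to\R^{n_u}$ suffices), which guarantees that the induced stochastic process is well posed and that the expectations in the statement make sense.
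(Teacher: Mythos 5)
Your proposal is correct and matches the paper's own argument, which likewise disposes of the corollary by noting that the bound in \cref{thm:1} holds for arbitrary inputs $\inputu_t\in\R^{n_u}$ and hence in particular for $\inputu_t=\bm{g}(\x_t)$. Your more detailed tracing of where the input enters the estimates (through $\check{m}_i$ and $\check{K}_t$) is a faithful elaboration of the same one-line reduction.
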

\begin{proof}
    The bound for the closed-loop is a direct consequence of~\cref{thm:1} as it holds for arbitrary inputs $\inputu_t\in\R^{n_u}$.
\end{proof}
\Cref{thm:1,prop:1} show the boundedness of GP-ASSMs for bounded mean function and kernel, which holds for the true as well as for the approximated distribution. However, it is also possible that a GP-ASSM with unbounded kernel leads to bounded dynamics. This mainly depends on the training data. In this case, the boundedness property might be lost for a different maximum length of memory, as the following proposition states.
\begin{thm}
\label{prop:notpbounded}
	Consider two GP-ASSMs with the states~$\x^m_t$ and~$\x^{{m^\prime}}_t$, respectively, such that $\xkp^m\sim \ND\big(\dyn_t(\vxi_t,\Xi^m_t),F_t(\vxi_t,\Xi^m_t)\big)$ and $\xkp^{{m^\prime}}\sim \ND\big(\dyn_t(\vxi^\prime_t,\Xi^{{m^\prime}}_t),F_t(\vxi^\prime_t,\Xi^{{m^\prime}}_t)\big)$, where~$\overline{m}$ and~$\overline{m}^\prime$ are the maximum length of memory. Then, for~$\overline{m}<\overline{m}^\prime$
	\begin{align}
		\sup_{t\in\N}\ev\Verts{\xk^m}^{p}<\infty\nRightarrow\sup_{t\in\N}\ev\Vert\xk^{{m^\prime}}\Vert^{p}<\infty\label{sec3:for:proppbound1}
	\end{align}
	holds for any~$p\in\N$ and~$\x^m_0=\x^{{m^\prime}}_0\in\R^{n_x}$.
\end{thm}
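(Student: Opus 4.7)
The plan is to prove this non-implication by exhibiting an explicit counterexample. \cref{thm:1} shows that bounded mean and bounded kernel always yield $p$-boundedness irrespective of the memory length, so the counterexample must invoke an unbounded kernel.

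First, I would consider the one-dimensional setting with zero mean $m\equiv 0$, linear (and therefore unbounded) kernel $k(\xi,\xi')=\xi\xi'$, a single noisy training point $X=[1]$ with $Y=[0]$ and noise variance $\sigma_n^2>0$, zero input $u_t\equiv 0$ so that $\vxi_t=x_t$, and deterministic initial condition $x_0=1$.

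Next, for $\overline{m}=0$ the posterior collapses to mean zero and variance $\beta^2 x_t^2$ with $\beta^2=\sigma_n^2/(1+\sigma_n^2)$, yielding the multiplicative recursion $x_{t+1}=\beta x_t Z_t$ with iid $Z_t\sim\ND(0,1)$. Hence $\ev|x_t|^p=\beta^{tp}(\ev|Z|^p)^t$, which stays uniformly bounded in $t$ whenever $\beta^p\ev|Z|^p<1$; choosing $\sigma_n^2$ sufficiently small enforces this for any prescribed $p\in\N$, so the left-hand side of~\cref{sec3:for:proppbound1} holds. For $\overline{m}^\prime=1$ and $t\geq 1$, I would compute the posterior using the augmented Gram matrix whose determinant equals $\sigma_n^2 x_{t-1}^2$, nonzero as long as $x_{t-1}\neq 0$. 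A direct inversion gives $\bm{k}(\vxi_t,X_t^{m^\prime})\tran(K_t^{m^\prime})^{-1}=[0,\,x_t/x_{t-1}]$, so the posterior variance collapses to zero while the posterior mean evaluates to $x_t^2/x_{t-1}$. The resulting deterministic recursion $x_{t+1}=x_t^2/x_{t-1}$ solves explicitly to $x_t=x_0(x_1/x_0)^t$, where $x_1\sim\ND(0,\beta^2)$ is the sole source of randomness. Consequently $\ev|x_t|^p=\ev|x_1|^{tp}$ is the absolute $(tp)$-th moment of a centred Gaussian, which diverges super-exponentially in $t$, producing the failure on the right-hand side.

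The main obstacle will be the algebraic posterior computation for $\overline{m}^\prime=1$: verifying the variance collapse and ensuring $x_{t-1}\neq 0$ almost surely so that the $2\times 2$ inverse and the induced recursion are well defined. Once this is settled, the comparison of moments reduces to elementary estimates for absolute moments of a centred Gaussian, and the extension to arbitrary $\overline{m}<\overline{m}^\prime$ follows by embedding the same two-memory-level phenomenon into the first two steps after the shared history.
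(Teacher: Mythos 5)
Your construction is essentially the paper's own proof: both exhibit a one-dimensional, zero-input, linear-kernel counterexample in which the $\overline{m}=0$ model is a bounded multiplicative random recursion while any model with memory collapses to a zero-variance, deterministic geometric recursion $x_{t+1}=(x_1/x_0)\,x_t$ whose random first-step ratio destroys $p$-boundedness (the paper uses two training points at $\pm 1$ and argues via $\Prob(\vert x_1/x_0\vert>1)>0$, whereas you use a single training point and compute $\ev\vert x_1\vert^{tp}$ directly, which is if anything slightly cleaner). One caveat: your closing claim that the case of general $\overline{m}\geq 1$ follows by ``embedding'' does not work with this kernel --- once $\overline{m}\geq 1$ both models are pinned to the same deterministic line through the origin, so the hypothesis side of the non-implication fails --- but the paper's proof has exactly the same limitation, treating only $\overline{m}=0$ versus $\overline{m}^\prime\geq 1$.
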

\begin{proof}\label{prf:notpbounded}
    We use a counter example to prove this theorem. Consider a one-dimensional GP-ASSM with~$\overline{m}=0$ and linear kernel~$k(\bm{z},\bm{z}^\prime)=\bm{z}^\top \bm{z}^\prime$, where~$\bm{z},\bm{z}^\prime\in\R^n$. We assume two training points at~$X_1=[-1;0],X_2=[1;0]$ and~$Y=[Y_1,Y_2]\in\R^2$ with noise~$\sigma_n^2=1$ and input~$u_t=0$. Using the definition of~\cref{for:gp_ssm_m}, the mean~$f_t$ and variance~$F_t$ of next state~$x_{t+1}^0$ is given by
    \begin{align}
        f_t(\vxi_t,\emptyset)=\frac{1}{3}x_t^0(Y_2-Y_1),\,F_t(\vxi_t,\emptyset)=\frac{1}{3}{(x_t^0)}^2.\label{for:lin_ex}
    \end{align}
    For~$\vert Y_2-Y_1\vert\leq 3$, the sequence~$\{x_t^0\},t\in\N$ is p-bounded, since~$x^0=0$ is stochastically asymptotically stable in the large. Next, in an alternative GP-ASSM, we use the same training points with~${m^\prime}\geq 1$. Starting at~$x_0^{m^\prime}\in\R\backslash 0$, the distribution of~$x_1^{m^\prime}$ can be computed using~\cref{for:lin_ex}. With a Gaussian distributed sampled~$x_1^{m^\prime}$, the next step state~$x_{t+1}^{m^\prime}$ for~$t\geq 1$ are given by
    \begin{align}
    	f_t\left(\begin{bmatrix}x_t^{m^\prime}\\ 0\end{bmatrix},\Xi^{{m^\prime}}_t\right)&=\frac{x_1^{m^\prime}}{x_0^{m^\prime}}x_t^{m^\prime},F_t\left(\begin{bmatrix}x_t^{m^\prime}\\ 0\end{bmatrix},\Xi^{{m^\prime}}_t\right)=0\notag\\
    	x_{t+1}^{m^\prime}&=\frac{x_1^{m^\prime}}{x_0^{m^\prime}}x_t^{m^\prime}.\label{for:lin_sys}
    \end{align}
    The predicted variance for all states in the future is zero, since the state~$x_1^{m^\prime}$ exactly defines a sample of the GP with a linear kernel. The reason is that a linear function is fully defined by one point unequal zero. Based on the Gaussian distribution of~$x_1^{m^\prime}$, the probability, that a trajectory of~\cref{for:lin_sys} is unbounded, is computed by
    \begin{align}
    	\Prob\left(\vert x_1^{m^\prime}/x_0^{m^\prime}\vert\!>\!1\right)&=1+\operatorname{cdf}\left[(-3\vert x_0^{m^\prime}\vert+\Delta Y)/([x_0^{m^\prime}]^2)\right]\notag\\
    	&-\operatorname{cdf}\left[(3\vert x_0^{m^\prime}\vert+\Delta Y)/([x_0^{m^\prime}]^2)\right],\label{sec3:for:propproofbound}
    \end{align}
    where~$\Delta Y=Y_1-Y_2$ and~$\operatorname{cdf}$ denotes the standard normal cumulative distribution function. Since the probability~\cref{sec3:for:propproofbound} is greater than zero, the sequence~$\{x_t^{m^\prime}\},t\in\N$ is not p-bounded. Hence, a different maximum length of memory~$\overline{m}$ of a GP-ASSM might lead to a loss boundedness property as stated in~\cref{prop:notpbounded}.
\end{proof}
\begin{exam}
In~\cref{fig:CL}, the counter example from the proof of~\cref{prop:notpbounded} is visualized. For this purpose, we employ two GP-ASSMs with $\overline{m}=0$ and $\overline{m}=10$, respectively, based on a linear kernel~$k(\bm{z},\bm{z}^\prime)=\bm{z}^\top \bm{z}^\prime$. Although the samples of the GP-ASSM with~$\overline{m}=0$ are bounded (top), a GP-ASSM with~$\overline{m}^\prime=10$ (bottom) shows unbounded trajectories, which leads to an unbounded mean and variance.
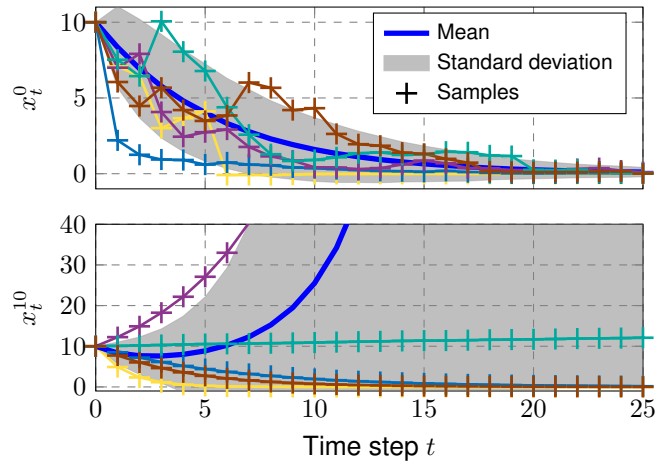
\begin{figure}[tbh]
	\begin{center}
		\tikzsetnextfilename{lin_system}
		\captionsetup{type=figure}\begin{tikzpicture}
\begin{axis}[
name=plot1,
  ylabel={$x_t^0$},
  legend pos=north west,
  grid style={dashed,gray},
  grid = both,
       width=\columnwidth,
  height=4cm,
  ymin=-1,
  ymax=11,
  xmin=0,
  xmax=25,
  font={\sffamily},
  legend style={font=\footnotesize\sffamily},
  legend cell align={left},
  legend pos={north east},
  xticklabels={,,}]
  \addplot[color=blue,line width=2pt] table [x index=0,y index=1]{data/lin_system_2.dat};
  \addplot+[name path=varp1, color=gray,opacity=0.3, no marks] table [x index=0,y expr=\thisrowno{1}+\thisrowno{2}]{data/lin_system_2.dat};
\addplot+[name path=varm1, color=gray,opacity=0.3, no marks] table [x index=0,y expr=\thisrowno{1}-\thisrowno{2}]{data/lin_system_2.dat};
\addplot[gray,opacity=0.5] fill between[ of = varm1 and varp1]; 
\addplot[color=Goldenrod,line width=1pt,mark=+,mark size=4] table [x index=0,y index=3]{data/lin_system_2.dat};
\addplot[color=NavyBlue,line width=1pt,mark=+,mark size=4] table [x index=0,y index=4]{data/lin_system_2.dat};
\addplot[color=Fuchsia,line width=1pt,mark=+,mark size=4] table [x index=0,y index=5]{data/lin_system_2.dat};
\addplot[color=Emerald,line width=1pt,mark=+,mark size=4] table [x index=0,y index=6]{data/lin_system_2.dat};
\addplot[color=RawSienna,line width=1pt,mark=+,mark size=4] table [x index=0,y index=7]{data/lin_system_2.dat};
\addplot[only marks,color=black,line width=1pt,mark=+,mark size=4] coordinates {(-10,-10)};
\legend{Mean,,,Standard deviation,,,,,,Samples}
\end{axis}
\begin{axis}[
    name=plot2,
    at=(plot1.below south west), anchor=above north west,
  xlabel={Time step $t$},
  ylabel={$x_t^{10}$},
  legend pos=north west,
  grid style={dashed,gray},
  grid = both,
       width=\columnwidth,
  height=3.8cm,
  ymin=-1,
  ymax=40,
  xmin=0,
  xmax=25,
  font={\sffamily},
  legend style={font=\footnotesize\sffamily},
  legend cell align={left}]
\addplot+[name path=varp1, color=gray,opacity=0.3, no marks] table [x index=0,y expr=\thisrowno{1}+\thisrowno{2}]{data/lin_system_1.dat};
\addplot+[name path=varm1, color=gray,opacity=0.3, no marks] table [x index=0,y expr=\thisrowno{1}-\thisrowno{2}]{data/lin_system_1.dat};
\addplot[gray,opacity=0.5] fill between[ of = varm1 and varp1]; 
\addplot[color=Goldenrod,line width=1pt,mark=+,mark size=4] table [x index=0,y index=3]{data/lin_system_1.dat};
\addplot[color=NavyBlue,line width=1pt,mark=+,mark size=4] table [x index=0,y index=4]{data/lin_system_1.dat};
\addplot[color=Fuchsia,line width=1pt,mark=+,mark size=4] table [x index=0,y index=5]{data/lin_system_1.dat};
\addplot[color=Emerald,line width=1pt,mark=+,mark size=4] table [x index=0,y index=6]{data/lin_system_1.dat};
\addplot[color=RawSienna,line width=1pt,mark=+,mark size=4] table [x index=0,y index=7]{data/lin_system_1.dat};
\addplot[color=blue,line width=2pt] table [x index=0,y index=1]{data/lin_system_1.dat};
\end{axis}
\end{tikzpicture} 
		\vspace{-0.4cm}
		\captionof{figure}{The GP-ASSM with~$\overline{m}=0$ (top) results in bounded system trajectories whereas a GP-ASSM with~$\overline{m}^\prime\geq 1$ (bottom) generates unbounded trajectories. Therefore the boundedness property is lost for different maximum lengths of memory~$\overline{m}$.}
		\vspace{-0.4cm}
		\label{fig:CL}
	\end{center}
	\end{figure}
\end{exam}
The following theorem shows the relationship between the boundedeness of GP-ASSMs with different length of memory. It states that the approximated dynamics given by a GP-ASSM is bounded if the dynamics of the GP-SSM is bounded. Thus, it allows to use the approximation in control settings without losing the boundedness, which is important for the robustness and stability analysis. Note, that in contrast to~\cref{thm:1}, the kernel is not required to be bounded.
\begin{thm}
\label{prop:pbounded}
	Considering two GP-ASSMs with the states~$\x^m_t$ and~$\x^{{m^\prime}}_t$, respectively, such that $\xkp^m\sim \ND\big(\dyn_t(\vxi_t,\Xi^m_t),F_t(\vxi_t,\Xi^m_t)\big)$ and $\xkp^{{m^\prime}}\sim \ND\big(\dyn_t(\vxi^\prime_t,\Xi^{{m^\prime}}_t),F_t(\vxi^\prime_t,\Xi^{{m^\prime}}_t)\big)$, where~$\overline{m}$ and~$\overline{m}^\prime$ are the maximum length of memory. Then, if~$\overline{m}<\overline{m}^\prime$,
	\begin{align}
		\sup_{t\in\N,\x^{{m^\prime}}_0\in\R^{n_x}}\!\!\!\!\ev\Vert\xk^{{m^\prime}}\Vert^{p}<\infty\Rightarrow\!\!\!\sup_{t\in\N,\x^m_0\in\R^{n_x}}\!\!\!\!\ev\Verts{\xk^m}^{p}<\infty\label{sec3:for:proppbound2}
	\end{align}
	holds for all~$p\in\N$.
\end{thm}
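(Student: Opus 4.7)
My plan rests on two observations: for small $t$ the two approximated models are \emph{distributionally identical}, while for $t>\overline{m}$ the one-step transition kernel of the $\overline{m}$-model coincides with that of the $\overline{m}'$-model at the single specific time $t=\overline{m}$. Together these convert the problem into transferring a moment bound from the $\overline{m}'$-model at one specific time to the $\overline{m}$-model at every time.

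First, for every $t\le \overline{m}$ the actual memory length is $\ubar{m}=\min(t,\overline{m})=\min(t,\overline{m}')=t$, and from~\cref{for:memgpssm} the past-state matrices satisfy $\Xi^m_t=\Xi^{m'}_t=[\vxi_{t-1},\ldots,\vxi_0]$. The transition densities in~\cref{for:gp_ssm_m} are thus identical functions of $(\vxi_t,\Xi_t,\mathcal{D})$, so an induction on $t$ yields that the joint law of $\x_{0:t}^m$ equals that of $\x_{0:t}^{m'}$ whenever $\x^m_0=\x^{m'}_0$. Therefore $\sup_{t\le \overline{m}+1,\,\x^m_0}\ev\Verts{\x^m_t}^p\le\sup_{t,\,\x^{m'}_0}\ev\Verts{\x^{m'}_t}^p<\infty$, which handles the first $\overline{m}+2$ time steps.

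For $t>\overline{m}$, the one-step Gaussian in~\cref{sec3:for:gpassmmeanvar} depends on $(\vxi_t,\Xi^m_t)\in\R^{n_\xi}\times\R^{n_\xi\times\overline{m}}$ through a Gram matrix $K^m_t$ of fixed size $(n_\D+\overline{m})\times(n_\D+\overline{m})$. Since the $\overline{m}'$-model at the single time $t=\overline{m}$ also has $\ubar{m}=\overline{m}$, its one-step transition is governed by the identical formula, giving
\begin{equation*}
\mathrm{Law}\bigl(\x^m_{t+1}\mid\Xi^m_t=\bm A,\vxi_t=\bm a\bigr)=\mathrm{Law}\bigl(\x^{m'}_{\overline{m}+1}\mid\Xi^{m'}_{\overline{m}}=\bm A,\vxi_{\overline{m}}=\bm a\bigr)
\end{equation*}
for every $(\bm A,\bm a)\in\R^{n_\xi\times\overline{m}}\times\R^{n_\xi}$. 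Hence $G(\bm A,\bm a):=\ev\bigl[\Verts{\x^m_{t+1}}^p\,\big|\,\Xi^m_t=\bm A,\vxi_t=\bm a\bigr]$ is independent of $t$ and is common to both models; since the extended window $W^m_t=(\vxi_{t-\overline{m}},\ldots,\vxi_t)$ is a time-homogeneous Markov chain on $\R^{n_\xi(\overline{m}+1)}$ for $t\ge\overline{m}$, an induction on $t$ using $\ev\Verts{\x^m_{t+1}}^p=\ev\,G(\Xi^m_t,\vxi_t)$ reduces the proof to a uniform-in-$t$ bound on this expectation.

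\textbf{Main obstacle.} The technical crux is to transfer the \emph{unconditional} bound $\sup_{t,\,\x^{m'}_0}\ev\Verts{\x^{m'}_t}^p<\infty$ into pointwise control of $G$. My strategy exploits the identity $\ev\Verts{\x^{m'}_{\overline{m}+1}}^p=\ev\,G(\Xi^{m'}_{\overline{m}},\vxi_{\overline{m}})$, which is bounded uniformly over $\x^{m'}_0\in\R^{n_x}$: varying $\x^{m'}_0$ drives the law of $(\Xi^{m'}_{\overline{m}},\vxi_{\overline{m}})$ through a sufficiently rich family of distributions, and combining this with continuity of $G$ (inherited from continuity of the kernel and its Gram-matrix inverse) via a concentration/approximation argument yields a uniform pointwise bound $G(\bm A,\bm a)\le C$ on the reachable window set. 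Plugging back closes the induction and gives $\sup_{t,\,\x^m_0}\ev\Verts{\x^m_t}^p<\infty$, as required.
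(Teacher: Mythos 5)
Your handling of $t\le\overline{m}$ and your identification of the common one-step transition kernel $G(\bm A,\bm a)$ are correct and coincide with the first half of the paper's argument (both exploit that $\Xi^m_t=\Xi^{m'}_t$ for $t\le\overline{m}$ and that the transition formulas \cref{sec3:for:gpassmmeanvar} depend only on the window of the last $\ubar{m}$ states). The proof fails, however, exactly at the step you yourself flag as the main obstacle, and the repair you sketch does not go through. For $\overline{m}\ge 1$ the law of $(\Xi^{m'}_{\overline{m}},\vxi_{\overline{m}})$ under the $\overline{m}'$-model started at $\x^{m'}_0$ is the composition of $\overline{m}$ nondegenerate Gaussian kernels whose variances are fixed by the GP posterior and cannot be shrunk by the choice of $\x^{m'}_0$; this family of atomless laws does not concentrate at a prescribed window $(\bm A,\bm a)$, so the hypothesis $\sup_{\x^{m'}_0}\ev\,G(\Xi^{m'}_{\overline{m}},\vxi_{\overline{m}})<\infty$ is only an integral constraint on $G$. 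A continuous nonnegative function can have uniformly bounded integral against every member of a fixed family of absolutely continuous laws and still be unbounded (spikes of height $H$ on sets of measure $\varepsilon$ with $H\varepsilon$ small pass every such test), and nothing prevents the $\overline{m}$-model's window law at a later time from loading exactly those spikes; continuity of $G$ does not rescue this without a uniform modulus of continuity, which you do not have. Your argument does close in the special case $\overline{m}=0$, where the time-$0$ window is a deterministic function of $\x^{m'}_0$ and the hypothesis genuinely yields a pointwise bound on $G$; for $\overline{m}\ge1$ it does not.

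The paper takes a different and shorter route for $t>\overline{m}$ that avoids any pointwise bound on $G$: it uses the fact that the supremum in \cref{sec3:for:proppbound2} ranges over \emph{all} initial conditions, reinitializes the $\overline{m}'$-chain at the oldest state held in the $\overline{m}$-model's memory, and reuses the $t\le\overline{m}$ identification of the two models over the subsequent $\overline{m}+1$ steps before invoking the hypothesis on the restarted chain. If you want to complete your route, you would need either to adopt that restart device, or to show that the $\overline{m}$-model's window law is dominated by a mixture over $\x^{m'}_0$ of the $\overline{m}'$-model's window laws at time $\overline{m}$ (note the $\overline{m}$-model's transitions between $t-\overline{m}$ and $t$ still condition on pre-window states, so this domination is itself nontrivial). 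As it stands, the pointwise bound on $G$ you aim for is strictly stronger than what the stated hypothesis provides, and the concentration argument is the step that would fail.
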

\begin{rem}
Note the swap of~$\xk^{m^\prime}$ and~$\xk^{m}$ in~\cref{sec3:for:proppbound2} in contrast to~\cref{sec3:for:proppbound1}.
\end{rem}
\begin{proof}
    In the following, we split the proof in two parts depending on time step~$t$.\\
    For~$t\leq \overline{m}$, the memories~$\Xi^m_t$ and~$\Xi^{{m^\prime}}_t$ of both GP-ASSMs are identical and, thus, the expected value is bounded by $\sup_{t\in\N,t\leq \overline{m}}\ev[{(\x_t^{{m^\prime}})}^p]=\sup_{t\in\N,t\leq m}\ev[{(\x_t^{m})}^p]<\infty$. For~$t>m$, we use the last point in memory~$\x^{{m^\prime}}_{\max{(0,t-{m^\prime}-1)}}$ as initial point for~$\xkp^m$. Thus, we can follow the above argumentation again, which leads to  $\sup_{t\in\N,t>\overline{m}}\ev[{(\x_t^{m})}^p]<\infty$ such that the boundedness is preserved.
\end{proof}
\subsection{GP Nonlinear Output Error Models}
In this section, we transfer our results about boundedness of GP-ASSMs to GP-ANOE models. In GP-ANOE models, the feedback loop is closed by the output~$\yk$ instead of the state~$\xk$ as in GP-ASSMs. Therefore, we present the following results without further explanation and refer here to~\cref{sec:gpsmm}.
\begin{prop}\label{thm:1noe}
    Consider a GP-ANOE~\cref{for:gp_noe_m} with maximum memory length~$\overline{m}$ and bounded mean $\vert m(\x)\vert\leq m_{max}\in\R_{+}$ and kernel function~$k(\x,\x^\prime)\leq k_{max}\in\R_{+}$ for all $\x,\x^\prime\in\R^{n_\xi}$. Then for every $\y_0\in\R^{n_y}$, the output~$\yk^m\in\R^{n_y},t\in\N_{+}$ is ultimately \emph{p-bounded} by
    \begin{align}
        \sup_{t\in\N_{+}}\ev \Vert \y_t\Vert^p&\leq n_y\left(\frac{c_2}{2\pi}\right)^{\frac{1}{2}} \!\int\limits_{\R} \vert y^p \vert  \exp\! \left(-\frac{1}{2}\Vert c_1-x \Vert^2 c_2 \right)\! d x\notag\\
        c_1&=m_{max}+n_\D k_{max} \max_i\Vert (K+\sigma_n I)^{-1}Y_{:,i}\Vert\notag
    \end{align}
and  $c_2=k_{max}-\frac{k_{max}^2}{k_{max}+\sigma_n^2}$ for all~$p\in\N_+$.
\end{prop}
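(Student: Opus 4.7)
The plan is to mirror the proof of \cref{thm:1} with outputs $\yk^m$ playing the role of states and the NOE regressor $\zk$ playing the role of $\vxi_t$. As in that proof, I would first establish the result for the GP-NOE model (the $\overline{m}=\infty$ case), because a finite maximum memory length conditions on a subset of past regressors and therefore cannot enlarge the bound obtained for the infinite-memory case.

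Using \cref{thm:gpnoe}, I would write the joint density $p(\y_{1:t}\vert \vu_{0:t},\D)$ by telescoping the Gaussian conditionals, producing a density with conditional mean vector $\check{\bm m}_{0:t-1}$ whose $i$-th entry is $m(\z_i)+K(\z_i,X)^\top(K+\sigma_n^2 I)^{-1}(Y_{:,j}-\bm m(X))$ and with conditional covariance matrix $\check K_t=K(\z_{0:t-1},\z_{0:t-1})-K(\z_{0:t-1},X)^\top(K+\sigma_n^2 I)^{-1}K(\z_{0:t-1},X)$, in direct analogy to \cref{for:proofK,for:proofM}. Applying the bounded mean and kernel assumptions, I would then obtain $\vert\check m_i\vert\leq m_{max}+n_\D k_{max}\Verts{(K+\sigma_n^2 I)^{-1}Y_{:,j}}$ and, via Courant--Fischer together with $k(\z,\z^\prime)\leq k_{max}$, a uniform lower bound on $\geig(\check K_t)$ of the form $k_{max}-k_{max}^2/(k_{max}+\sigma_n^2)$. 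Substituting both bounds into the integral expression for $\sup_t\ev\vert y_t\vert^p$ yields a Gaussian envelope whose coordinates are independent, so the $t$-dimensional integral collapses to the one-dimensional integral appearing in the statement with constants $c_1,c_2$.

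Since each output dimension is governed by an independent GP with the same bounded mean and kernel, I would then extend the scalar bound coordinate-wise and sum, producing the prefactor $n_y$; the case of arbitrary finite $\overline{m}$ follows immediately because the GP-ANOE conditions on a subset of what the GP-NOE conditions on, so the same upper envelope applies. The main obstacle, as in \cref{thm:1}, is that $\check K_t$ and $\check{\bm m}_{0:t-1}$ are themselves random --- they depend on the previously sampled $\y_{0:t-1}$ through $\z_{0:t-1}$ --- so the joint is not truly a multivariate Gaussian. This is overcome by observing that the eigenvalue and magnitude bounds hold pointwise for every realization of $\z_{0:t-1}$, so the Gaussian upper envelope is valid realization-by-realization and survives the outer expectation, giving the stated time-uniform $p$-bound.
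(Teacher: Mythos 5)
Your proposal is correct and follows essentially the same route as the paper: the paper's own proof of this proposition consists of the single remark that it proceeds analogously to the proof of \cref{thm:1} with the GP-ANOE model \cref{for:gp_noe_m} in place of the GP-ASSM, and your write-up simply carries out that analogy explicitly (outputs for states, $\zk$ for $\vxi_t$, the same conditional mean/covariance bounds via Courant--Fischer, the same realization-wise Gaussian envelope, and the same reduction of finite $\overline{m}$ to the infinite-memory case).
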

\begin{proof}
    Analogously to the proof of~\cref{thm:1} with the GP-ANOE model defined by~\cref{for:gp_noe_m}.
\end{proof}
\begin{prop}
	Consider two GP-ANOEs with outputs~$\y^m_t$ and~$\y^{{m^\prime}}_t$, respectively, such that the output $\ykp^m\sim \ND\big(\bm{h}_t(\z^m_t,\Lambda_t),H_t(\z^m_t,\Lambda_t)\big)$ and $\ykp^{{m^\prime}}\sim \ND\big(\bm{h}_t(\z^{{m^\prime}}_t,\Lambda^\prime_t),H_t(\z^{{m^\prime}}_t,\Lambda^\prime_t)\big)$
	 where~$\overline{m}$ and~$\overline{m}^\prime$ are the maximum length of memory, respectively. Then, for~$\overline{m}<\overline{m}^\prime$, $\sup_{t\in\N}\ev\Verts{\ykp^m}^{p}<\infty\nRightarrow\sup_{t\in\N}\ev\Vert\ykp^{{m^\prime}}\Vert^{p}<\infty$
	holds for~$p\in\N$ and~$\z^m_0=\z^{{m^\prime}}_0\in\R^{n_\zeta}$.
	\label{prop:notpboundednoe}
\end{prop}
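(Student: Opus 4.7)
The plan is to mirror the counter-example strategy used in the proof of \cref{prop:notpbounded}, since the output-feedback structure of a GP-ANOE plays exactly the role that the state-feedback plays in a GP-ASSM. Concretely, I would construct a one-dimensional GP-ANOE with $n_y=n_u=1$, $n_\text{out}=n_\text{in}=1$, so that the regressor is $\z_t=[y_t;u_t]\in\R^2$, equipped with the linear kernel $k(\z,\z^\prime)=\z^\top\z^\prime$. I would pick two training points at $X_1=[-1;0]$, $X_2=[1;0]$, outputs $Y=[Y_1,Y_2]$ with $\vert Y_2-Y_1\vert\leq 3$, noise variance $\sigma_n^2=1$, and drive the system with $u_t=0$.

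With $\overline{m}=0$ and these choices, a direct computation using \cref{for:gp_noe_m,for:grammnoe} gives the one-step predictive mean and variance
\begin{align}
h_t(\z_t,\emptyset)=\tfrac{1}{3}y_t^0(Y_2-Y_1),\qquad H_t(\z_t,\emptyset)=\tfrac{1}{3}(y_t^0)^2,\notag
\end{align}
exactly the analog of \cref{for:lin_ex}. For $\vert Y_2-Y_1\vert\leq 3$ the origin is stochastically asymptotically stable in the large, so $\sup_{t\in\N}\ev\vert y_t^0\vert^p<\infty$ for every $p\in\N$.

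Next, with $\overline{m}^\prime\geq 1$, the first step produces $y_1^{m^\prime}\sim\ND(0,\tfrac{1}{3}(y_0^{m^\prime})^2)$ from the same initial Gaussian. Once $(\z_0,y_1^{m^\prime})$ is appended as a noise-free memory point, the linear kernel forces the predictive variance to collapse: a linear function on $\R$ is determined by any single nonzero evaluation, so for all $t\geq 1$,
\begin{align}
H_t\!\left(\begin{bmatrix}y_t^{m^\prime}\\0\end{bmatrix}\!,\Lambda_t^{m^\prime}\right)=0,\quad y_{t+1}^{m^\prime}=\tfrac{y_1^{m^\prime}}{y_0^{m^\prime}}\,y_t^{m^\prime},\notag
\end{align}
which is the exact analog of \cref{for:lin_sys}. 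Using the Gaussian distribution of $y_1^{m^\prime}$, the probability that the geometric ratio $\vert y_1^{m^\prime}/y_0^{m^\prime}\vert$ exceeds $1$ is strictly positive, given by the same cumulative distribution expression as \cref{sec3:for:propproofbound} with $Y$ replacing $X$. On this event the trajectory diverges geometrically, so $\sup_{t\in\N}\ev\vert y_t^{m^\prime}\vert^p=\infty$, establishing the non-implication.

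The argument is essentially mechanical once the GP-ANOE counterpart of the GP-ASSM linear-kernel example is set up; I expect no real obstacle. The only point requiring care is notational: making sure the extended training matrices $X_t^m,Y_t^m$ and Gram matrix $K_t^m$ in the GP-ANOE definition reduce to the same $(2\times 2)$ calculation as in the GP-ASSM case once $u_t=0$ is substituted, so that formulas \cref{for:lin_ex} and \cref{for:lin_sys} carry over verbatim with $x\mapsto y$ and $\Xi^m_t\mapsto\Lambda^m_t$. After that, the rest of the proof of \cref{prop:notpbounded} applies without change.
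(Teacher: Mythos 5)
Your proposal is correct and matches the paper exactly: the paper's own proof of this proposition is a one-line reference stating that it proceeds analogously to the proof of \cref{prop:notpbounded} with the GP-ANOE model of \cref{for:gp_noe_m}, which is precisely the linear-kernel counter-example you reconstruct with $y$ in place of $x$ and $\Lambda^m_t$ in place of $\Xi^m_t$. No gaps.
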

\begin{proof}
Analogously to the proof of~\cref{prop:notpbounded} with the GP-ANOE model defined by~\cref{for:gp_noe_m}.
\end{proof}
\begin{prop}
\label{prop:pboundednoe}
	Consider two GP-ANOE models with outputs~$\y^m_t$ and~$\y^{{m^\prime}}_t$, respectively, such that $\ykp^m\sim \ND\big(\bm{h}_t(\z^m_t,\Lambda_t),H_t(\z^m_t,\Lambda_t)\big)$ and $\ykp^{{m^\prime}}\sim \ND\big(\bm{h}_t(\z^{{m^\prime}}_t,\Lambda^\prime_t),H_t(\z^{{m^\prime}}_t,\Lambda^\prime_t)\big)$, where~$\overline{m}$ and~$\overline{m}^\prime$ are the maximum length of memory, respectively. Then, if~$\overline{m}<\overline{m}^\prime$ holds, $\sup_{t\in\N,\z^{{m^\prime}}_0\in\R^{n_y}}\ev\Vert \yk^{{m^\prime}}\Vert^{p}<\infty\Rightarrow\sup_{t\in\N,\z^m_0\in\R^{n_x}}\ev\Verts{\yk^m}^{p}<\infty$ holds for all~$p\in\N$.
\end{prop}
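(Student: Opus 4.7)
The plan is to mirror, step for step, the two-phase argument used in the proof of \cref{prop:pbounded}, with the state $\x_t$ replaced by the output $\y_t$, the regressor $\vxi_t$ by $\z_t$, and the memory $\Xi^m_t$ by $\Lambda^m_t$. Since GP-ANOE models are, by construction of \cref{for:gp_noe_m}, Markov chains of finite order once $\overline{m}$ is finite, the same finite-memory structure that underlies the GP-ASSM argument is available here.

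First, for $t \le \overline{m}$, the actual length of memory satisfies $\ubar{m} = \ubar{m}' = t$ for both models, so $\Lambda^m_t = \Lambda^{m'}_t$ and the extended training sets and Gram matrices defined in \cref{for:grammnoe} coincide. Consequently, starting from the same $\z_0^m = \z_0^{m'}$, the conditional distributions of $\y_{t+1}^m$ and $\y_{t+1}^{m'}$ given by \cref{for:gp_noe_m} are identical for every $t \le \overline{m}$, and the supremum $\sup_{t \in \N,\, t \le \overline{m}} \ev\Verts{\y_t^m}^p$ is automatically finite because it equals $\sup_{t \in \N,\, t \le \overline{m}} \ev\Vert\y_t^{m'}\Vert^p$, which is bounded by hypothesis.

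Second, for $t > \overline{m}$, I would use a reset/embedding argument. Because the distribution of $\y_{t+1}^m$ depends on past outputs and inputs only through $\z_t$ together with the window $\Lambda^m_t$ of length $\overline{m}$, any window of realizations of the $\y^m$-process of size $\overline{m}+1$ can be interpreted as an initial condition $\z_0^{m'} \in \R^{n_\zeta}$ for a GP-ANOE with memory $\overline{m}'$, after which the first $\overline{m}+1$ predictive distributions coincide with those of the $\y^m$-process by the first step. The assumed uniformity of the bound over all initial conditions $\z_0^{m'} \in \R^{n_\zeta}$ then transfers to a uniform bound on $\ev\Verts{\y_t^m}^p$ for all $t > \overline{m}$ and all initial $\z_0^m$, giving the claimed implication.

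The main obstacle I expect is making this embedding precise: specifically, justifying that a window of outputs produced by the smaller-memory chain can genuinely be treated as a valid initial condition for the larger-memory chain, so that the supremum over $\z_0^{m'}$ on the right-hand side of \cref{sec3:for:proppbound2} dominates the supremum over $\z_0^m$ on the left-hand side. Once this Markov-of-finite-order reset is formalized, the rest reduces to the same finite-horizon bookkeeping used in the GP-ASSM case, and the $n_y$-dimensional extension is handled componentwise exactly as in the proof of \cref{thm:1noe}.
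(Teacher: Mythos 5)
Your proposal matches the paper's approach: the paper proves this proposition simply by declaring it analogous to the GP-ASSM result (\cref{prop:pbounded}), whose proof is exactly your two-phase argument — identical memories for $t\leq\overline{m}$, and for $t>\overline{m}$ a reset using a point from the larger-memory chain's window as a fresh initial condition, exploiting the supremum over all initial conditions. The embedding subtlety you flag is real but is glossed over in the paper's own (very terse) proof as well, so your treatment is if anything slightly more careful than the original.
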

\begin{proof}
Analogously to the proof of~\cref{prop:pbounded} with the GP-ANOE model defined by~\cref{for:gp_noe_m}.
\end{proof}
%%%%%%%%%%%%%%%%%%%%%%%%%%%%%%%%%%%%%%%%%%%%%%%%%%%%%%%%%%%%%%%%%%%%%%%%%%%%%%%%%%%%%%%%%%%%%%%%%%%%
\section{Case study}
In two case studies, we demonstrate the modeling with GP-ASSMs and discuss their behavior.
\subsection{Open-loop}
In an open-loop setting, we show the modeling of a dynamical system with a GP-SSM and GP-ASSMs with different maximum lengths of memory. As dynamical system to be modeled, we consider the non-autonomous discrete-time predator–prey system introduced in~\cite{liu2010note}. It is given by
\begin{align}
    \begin{split}
        x_{t+1,1}&=x_{t,1} \exp\left(1-0.4 x_{t,1}-\frac{(2+1.2 u_{t,1}) x_{t,2}}{1+(x_{t,1})^2} \right)\\
        x_{t+1,2}&=x_{t,2} \exp\left(1+0.5 u_{t,1}-\frac{(1.5- u_{t,2}) x_{t,2}}{x_{t,1}} \right)
    \end{split}\label{for:dynpp}
    \end{align}
    with input and noisy output
    \begin{align}
    \yk&=\xk+\bm{\nu},\quad
    \uk=\begin{bmatrix}
    \cos(0.02\pi t)\\ \sin(0.02\pi t)
    \end{bmatrix},\label{for:inputoutputpp}
\end{align}
and with state~$\x_{t}\in\R^2$, output~$\yk\in\R^2$, input~$\uk\in\R^2$, and Gaussian distributed noise~$\bm{\nu}\in\R^2,\bm{\nu}\sim\mathcal{N}(\bm{0},0.05^2 I)$. The states~$x_{t,1}$ and~$x_{t,2}$ represent the population size of prays and predators, respectively, but are taken to be continuous. The system dynamics~\cref{for:dynpp} are assumed to be unknown whereas the input and output, given by~\cref{for:inputoutputpp}, are assumed to be known. For the modeling with a GP-SSM, 33 training points of a trajectory from the predator–prey system with initial state~$\x_0=[0.3;0.8]$ are collected. More detailed, every third state~$\x_{t}$, input~$\uk$ and output~$\yk$ between~$t=1,\ldots,100$ is recorded. Thus, the training set~$\D=\{X,Y\}$ consists of
\begin{align}
    \begin{split}
        X&=[\vxi_1,\vxi_4,\ldots,\vxi_{97}]\text{ with }\vxi_t=[\x_{t};\uk]\\
        Y&=[\y_1,\y_4,\ldots,\y_{97}]^\top.
    \end{split}\label{for:studydataset}
\end{align}
Following the structure of GP-SSMs in~\cref{for:gp_ssm}, two GPs are employed to model each element of the state~$\xk$ separately. Both GPs are based on a squared exponential kernel with automatic relevance detection given by~$k(\vxi_t,\vxi_t^\prime)=\varphi_1^2 \exp{\left(-(\vxi_t-\vxi_t^\prime)^\top P^{-1}(\vxi_t-\vxi_t^\prime) \right) }$ with matrix~$P=\diag(\varphi_2^2,\ldots,\varphi_5^2)$. This kernel is bounded with respect to~$\vxi_t,\vxi_t^\prime\in\R^4$. The hyperparameters~$\varphi_1,\ldots,\varphi_5$ of each GP are optimized by means of the likelihood function, see~\cite{rasmussen2006gaussian}. In this study, we model the dynamics~\cref{for:dynpp} with a GP-SSM, a GP-ASSM with maximum length of memory~$10$ and a GP-ASSM with maximum length of memory~$0$. For the testing of these models, we select the initial state~$\x_0=[0.268;0.400]$. The top plot of~\cref{fig:case_study_1} visualizes the trajectory of the predator–prey system~\cref{for:dynpp}, considered as the ground-truth. 
\begin{figure}[ht]
	\begin{center}
		\tikzsetnextfilename{case_study_1}
		\vspace{0.15cm}
		\begin{tikzpicture}
\begin{axis}[
name=plot1,
  ylabel={Actual state $\xk$},
  legend pos=north west,
  width=\columnwidth,
  height=4.6cm,
  ymin=0,
  ymax=2.9,
  xmin=0,
  xmax=200,
    font={\sffamily},
  legend style={font=\footnotesize\sffamily,at={(0.53,1)},anchor=north},
  legend cell align={left},
  xticklabels={,,}]
\addplot[color=red,dashed,line width=1pt] table [x index=0,y index=1]{data/case_study_real.dat};
\addplot[color=blue, line width=1pt] table [x index=0,y index=2]{data/case_study_real.dat};
\legend{Prey,Predator}
\end{axis}
\begin{axis}[
    name=plot2,
    at=(plot1.below south west), anchor=above north west,
  ylabel={GP-SSM state $\xk^\infty$},
  legend pos=north west,
       width=\columnwidth,
  height=4.6cm,
  ymin=0,
  ymax=2.9,
  xmin=0,
  xmax=200,
  font={\sffamily},
  legend style={font=\footnotesize\sffamily},
  legend cell align={left},
  xticklabels={,,}]
\addplot[color=red,dashed,line width=1pt] table [x index=0,y index=1]{data/case_study_minf.dat};
\addplot[color=orange,dashed, line width=1pt] table [x index=0,y index=2]{data/case_study_minf.dat};
\addplot[color=yellow,dashed, line width=1pt] table [x index=0,y index=3]{data/case_study_minf.dat};
\addplot[color=blue, line width=1pt] table [x index=0,y index=4]{data/case_study_minf.dat};
\addplot[color=purple, line width=1pt] table [x index=0,y index=5]{data/case_study_minf.dat};
\addplot[color=black!20!blue, line width=1pt] table [x index=0,y index=6]{data/case_study_minf.dat};
\draw (axis cs:35,0.15) rectangle (axis cs:85,2.35);
\draw (axis cs:135,0.15) rectangle (axis cs:185,2.35);
\node[anchor=center] at (axis cs:110,2.1) {identical};
\end{axis}
\begin{axis}[
    name=plot3,
    at=(plot2.below south west), anchor=above north west,
  ylabel={GP-ASSM state $\xk^{10}$},
  legend pos=north west,
       width=\columnwidth,
  height=4.6cm,
  ymin=0,
  ymax=2.9,
  xmin=0,
  xmax=200,
  font={\sffamily},
  legend style={font=\footnotesize\sffamily},
  legend cell align={left},
  xticklabels={,,}]
\addplot[color=red,dashed,line width=1pt] table [x index=0,y index=1]{data/case_study_m10.dat};
\addplot[color=orange,dashed, line width=1pt] table [x index=0,y index=2]{data/case_study_m10.dat};
\addplot[color=yellow,dashed, line width=1pt] table [x index=0,y index=3]{data/case_study_m10.dat};
\addplot[color=blue, line width=1pt] table [x index=0,y index=4]{data/case_study_m10.dat};
\addplot[color=purple, line width=1pt] table [x index=0,y index=5]{data/case_study_m10.dat};
\addplot[color=black!20!blue, line width=1pt] table [x index=0,y index=6]{data/case_study_m10.dat};
\draw (axis cs:35,0.15) rectangle (axis cs:85,2.35);
\draw (axis cs:135,0.15) rectangle (axis cs:185,2.35);
\node[anchor=center] at (axis cs:110,2.1) {different};
\end{axis}
\begin{axis}[
    name=plot4,
    at=(plot3.below south west), anchor=above north west,
  xlabel={Time step $t$},
  ylabel={GP-ASSM state $\xk^0$},
  legend pos=north west,
       width=\columnwidth,
  height=4.6cm,
  ymin=0,
  ymax=2.9,
  xmin=0,
  xmax=200,
  font={\sffamily},
  legend style={font=\footnotesize\sffamily,row sep=-2pt,column sep=0.3cm,at={(0.5,1)},anchor=north},
  legend columns=3,
  transpose legend,
  legend cell align={left}]
\addplot[color=red,dashed,line width=1pt] table [x index=0,y index=1]{data/case_study_m0.dat};\addlegendentry{\raisebox{-2.4mm}[0mm][0mm]{Samples Prey}};
\addplot[color=orange,dashed, line width=1pt] table [x index=0,y index=2]{data/case_study_m0.dat};\addlegendentry{\raisebox{0mm}[0mm][0mm]{ }};
\addplot[color=yellow,dashed, line width=1pt] table [x index=0,y index=3]{data/case_study_m0.dat};\addlegendentry{\raisebox{0mm}[0mm][0mm]{ }};
\addplot[color=blue, line width=1pt] table [x index=0,y index=4]{data/case_study_m0.dat};\addlegendentry{\raisebox{-2.4mm}[0mm][0mm]{Samples Predator}};
\addplot[color=purple, line width=1pt] table [x index=0,y index=5]{data/case_study_m0.dat};\addlegendentry{\raisebox{0mm}[0mm][0mm]{ }};
\addplot[color=black!10!blue, line width=1pt] table [x index=0,y index=6]{data/case_study_m0.dat};\addlegendentry{\raisebox{0mm}[0mm][0mm]{ }};
\end{axis}
\end{tikzpicture} 
		\vspace{-0.3cm}\caption{From top to bottom: Trajectory of predator–prey system, samples of GP-SSM, samples of GP-ASSM with~$\overline{m}=10$, and samples of GP-SSM with~$\overline{m}=0$. For decreasing maximum length of memory of the approximations, the variance is increasing which leads to rougher trajectories.}\vspace{-0.5cm}
		\label{fig:case_study_1}
	\end{center}
\end{figure}
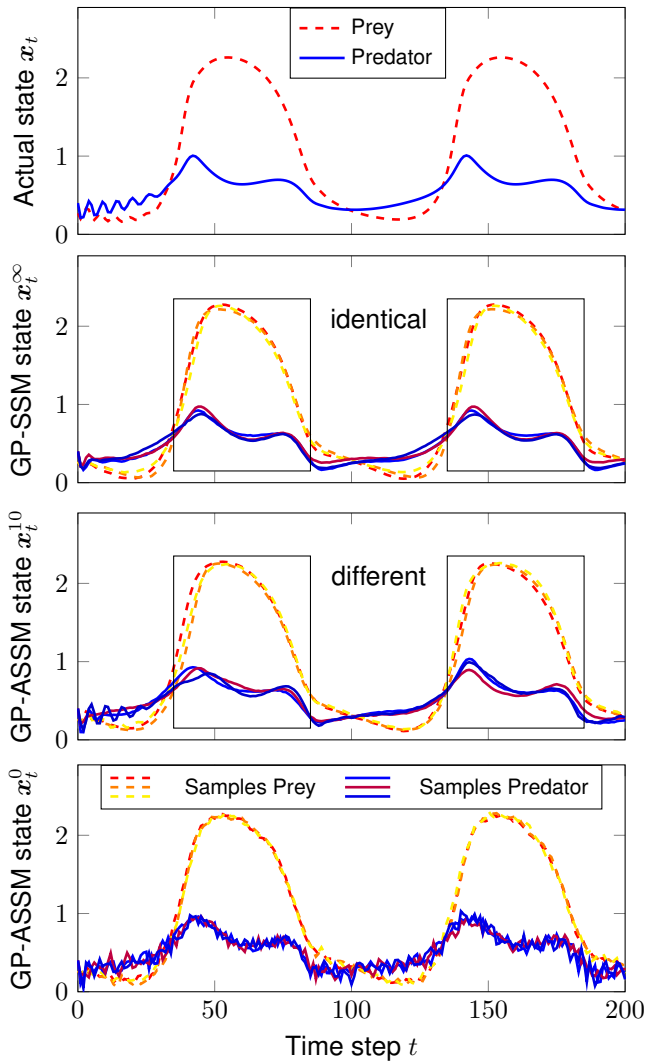
After a transition phase, the numbers of prays (red dashed) and predators (blue solid) converge to a periodic solution. The second plot shows three samples of the GP-SSM drawn by means of~\cref{propy:gpssm}. Even though the training set consists only of data up to the time step~$t=97$, see~\cref{for:studydataset}, the GP-SSM precisely predicts the trajectory after the transition phase. As the GP-SSM implies~$\overline{m}=\infty$, all past state transitions are added to the memory~$\Xi^\infty_t$, defined in~\cref{for:memgpssm}, and used for the next state ahead prediction. Consequently, the shape of each sample is identical in periodic repetitions, as highlighted inside the boxes in the second plot of~\cref{fig:case_study_1}. Three samples of the GP-ASSM with maximum length of memory~$10$, given by the means of~\cref{for:gp_ssm_m}, are visualized in the third plot of~\cref{fig:case_study_1}. The samples are similar to the samples of the GP-SSM, since the memory~$\Xi^{10}_t$ consists of sufficiently many past states to generate a similar predictive distribution for next step state. However, the shape of the samples differs between the periodic repetitions, as indicated with the two boxes. This variation is due to the reduced memory, which induces that the evolution of the state inside the left box is not considered for the prediction of the corresponding state in the right box. In contrast to the GP-SSM, the maximum length of memory~$10$ bounds the size of the Gram matrix~${K}_t^{10}$. In the bottom plot of~\cref{fig:case_study_1}, three samples of the GP-ASSM with maximum length of memory~$0$ are drawn. The variance for each prediction step is significantly higher, as described in~\cref{prop:var}, such that the trajectories are rougher. However, the size of the Gram matrix~${K}_t^{0}$ remains constantly low.\\
Finally, the GP-SSM and the GP-ASSM with~$\overline{m}=0$ are tested with 50 different initial values, which are drawn from a uniform distribution between~$[-5,5]$ for both states, visualized in~\cref{fig:case_study_2}. All trajectories are bounded, which supports~\cref{prop:pbounded,thm:1}.
\begin{figure}[bht]
	\begin{center}
		\tikzsetnextfilename{case_study_2}
		\vspace{0.15cm}
		\begin{tikzpicture}
\begin{axis}[
name=plot1,
  ylabel={State $\xk^\infty$},
  legend pos=north west,
       width=\columnwidth,
  height=4.5cm,
  ymin=-5,
  ymax=5,
  xmin=0,
  xmax=60,
    font={\sffamily},
  legend style={font=\footnotesize\sffamily},
  legend cell align={left}]

\foreach \N in {1,...,50}{
\addplot[red,dashed,mark=none] table [x index=0,y index=\N]{data/case_study_minf_multi.dat};
}
\foreach \N in {51,...,100}{
\addplot[blue,solid,mark=none] table [x index=0,y index=\N]{data/case_study_minf_multi.dat};
}
\end{axis}
\begin{axis}[
name=plot2,
    at=(plot1.below south west), anchor=above north west,
  xlabel={Time step $t$},
  ylabel={State $\xk^0$},
  legend pos=north west,
       width=\columnwidth,
  height=4.5cm,
  ymin=-5,
  ymax=5,
  xmin=0,
  xmax=60,
    font={\sffamily},
  legend style={font=\footnotesize\sffamily},
    legend pos=south east,
  legend cell align={left}]

\addplot[red, dashed,mark=none] table [x index=0,y index=1]{data/case_study_m0_multi.dat};
\addlegendentry{Prey}
\foreach \N in {2,...,50}{
\addplot[red, dashed,mark=none,forget plot] table [x index=0,y index=\N]{data/case_study_m0_multi.dat};
}
\addplot[blue, solid,mark=none] table [x index=0,y index=51]{data/case_study_m0_multi.dat};
\addlegendentry{Predator}
\foreach \N in {52,...,100}{
\addplot[blue, solid,mark=none,forget plot] table [x index=0,y index=\N]{data/case_study_m0_multi.dat};
}
\end{axis}
\end{tikzpicture} 
		\vspace{-0.2cm}\caption{Trajectories of 50 samples starting from multiple initial points demonstrate the boundedness of the GP-SSM and GP-ASSM.}\vspace{-0.5cm}
		\label{fig:case_study_2}
	\end{center}
\end{figure}
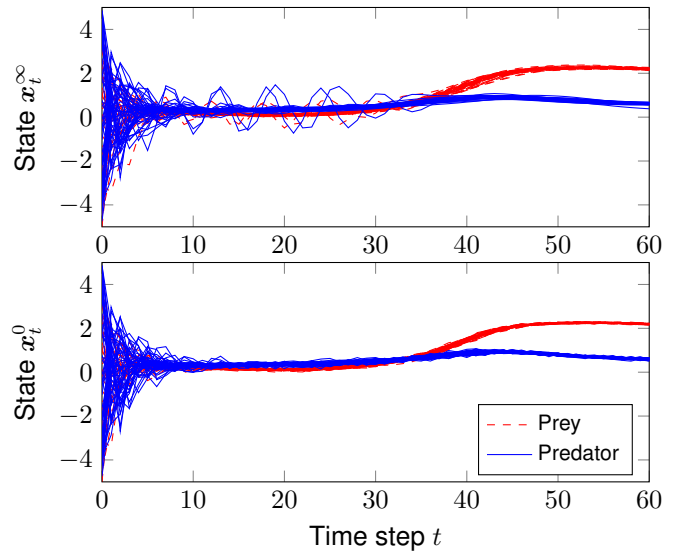
%%%%%%%%%%%%%%%%%%%%%%%%%%%%%%%%%%%%%%%%%%%%%%%%%%%%%%
\subsection{Closed-loop}
In the case study, we demonstrate the usage of a GP-ASSM to test a controller for a chaotic dynamical systems. For this purpose, we consider the time-continuous Thomas' cyclically symmetric attractor with an external input described by
\begin{align}
    \dot{\x}&=\begin{bmatrix}
        \sin(x_2)-b x_1 \\ \sin(x_3)-b x_2 \\ \sin(x_1)-b x_3
    \end{bmatrix}+\begin{bmatrix}
        u \\ 0 \\ 0
    \end{bmatrix}\label{for:case2dyn},\,\bm{y}=\x+\bm{\nu}
\end{align}
with state $\x\in\R^3$, input $u\in\R$, output $\bm{y}\in\R^3$ and noise $\bm{\nu}\sim\mathcal{N}(\bm{0},0.006^2 I)$. The constant $b$ is set to $b=0.2$. The resulting trajectories can be seen as the motion of a frictionally dampened particle moving in a 3D lattice of force, see~\cite{sprott2007labyrinth}. The goal is to test the performance of a set-point controller for the dynamics \cref{for:case2dyn} which are assumed to be unknown and costly to evaluate or safety critical. Therefore, a simulation with a GP-ASSM should be performed to evaluate the controller before it is applied to the real system. The training set $\D$ consists of 375 training points equally distributed on the set $[-1,1]^3$ for the state $x$ and $[-2,2]$ for the input $u$. The sample time is set to $\SI{0.01}{\second}$ for a low discretization error as, otherwise, it can lead to a significantly different behavior of the chaotic system. The GP-ASSM with maximum memory length of one is based on squared exponential kernels and the hyperparameters are optimized by means of the likelihood function. The control law for testing is assumed to be $u=-[2,2,2]\x$. The top graph of~\cref{fig:case_study2} visualizes the resulting 20 samples with the mean (solid line) and the $3\sigma$ standard deviation (shaded area). The samples converge to a small neighborhood around zero using the feedback control law. Then, the control law is applied to the actual, time-continuous system where the trajectory converges to zero. The example shows that the GP-ASSM is sufficient to mimic the behavior of an actual system. The benefit in contrast to standard GP-SSMs is the significantly reduced computation time.~\Cref{fig:case_study2_time} shows the computation time\footnote{Simulations were performed on a Intel i7-4600U with 2.1 Ghz, 8 GB RAM, and Matlab 2018.} which is required per time step (top) and the total time (bottom) over the time steps. The GP-SSM computation time for a step scales cubic with the number of time steps due to the required matrix inversion, see~\cref{for:gp_meanvar}. In contrast, the GP-ASSM is constant such that the total time is reduced from $\SI{361}{\second}$ to $\SI{52}{\second}$.
\begin{figure}
	\begin{center}
		\tikzsetnextfilename{case_study2}
		\begin{tikzpicture}
\begin{axis}[
    name=plot1,
  ylabel={GP-ASSM state $\xk^1$},
  legend pos=north east,
       width=0.96\columnwidth,
  height=3.7cm,
  ymin=-0.8,
  ymax=1.2,
  xmin=0,
  xmax=10,
  font={\sffamily},
  legend style={font=\footnotesize\sffamily},
  legend cell align={left},
  xticklabels={,,}]
\addplot+[name path=varp1, color=red,opacity=0.3, no marks] table [x index=0,y expr=\thisrowno{1}+\thisrowno{4}]{data/case_study2_m1_mean_var.dat};
\addplot+[name path=varm1, color=red,opacity=0.3, no marks] table [x index=0,y expr=\thisrowno{1}-\thisrowno{4}]{data/case_study2_m1_mean_var.dat};
\addplot[red,opacity=0.5] fill between[ of = varm1 and varp1]; 
\addplot+[name path=varp2, color=blue,opacity=0.3, no marks] table [x index=0,y expr=\thisrowno{2}+\thisrowno{5}]{data/case_study2_m1_mean_var.dat};
\addplot+[name path=varm2, color=blue,opacity=0.3, no marks] table [x index=0,y expr=\thisrowno{2}-\thisrowno{5}]{data/case_study2_m1_mean_var.dat};
\addplot[blue,opacity=0.5] fill between[ of = varm2 and varp2]; 
\addplot+[name path=varp3, color=black,opacity=0.3, no marks] table [x index=0,y expr=\thisrowno{3}+\thisrowno{6}]{data/case_study2_m1_mean_var.dat};
\addplot+[name path=varm3, color=black,opacity=0.3, no marks] table [x index=0,y expr=\thisrowno{3}-\thisrowno{6}]{data/case_study2_m1_mean_var.dat};
\addplot[black,opacity=0.5] fill between[ of = varm3 and varp3]; 
\addplot[color=red,line width=1pt] table [x index=0,y index=1]{data/case_study2_m1_mean_var.dat};
\addplot[color=blue,line width=1pt] table [x index=0,y index=2]{data/case_study2_m1_mean_var.dat};
\addplot[color=black,line width=1pt] table [x index=0,y index=3]{data/case_study2_m1_mean_var.dat};
\legend{,,,,,,,,,$x_1$,$x_2$,$x_3$}
\end{axis}
\begin{axis}[
    name=plot2,
    at=(plot1.below south west), anchor=above north west,
      xlabel={Time [s]},
  ylabel={Actual state $\x$},
  legend pos=north west,
  width=0.96\columnwidth,
  height=3.7cm,
  ymin=-0.8,
  ymax=1.2,
  xmin=0,
  xmax=10,
    font={\sffamily},
  legend style={font=\footnotesize\sffamily,at={(0.53,1)},anchor=north},
  legend cell align={left}]
\addplot[color=red,line width=1pt] table [x index=0,y index=1]{data/case_study2_real.dat};
\addplot[color=blue,line width=1pt] table [x index=0,y index=2]{data/case_study2_real.dat};
\addplot[color=black,line width=1pt] table [x index=0,y index=3]{data/case_study2_real.dat};
\end{axis}
\end{tikzpicture} 
		\vspace{-0.5cm}\caption{Top: Mean and variance of the GP-ASSM's samples testing the control law. Bottom: Behavior of the actual system with the tested control law.}\vspace{-0.5cm}
		\label{fig:case_study2}
	\end{center}
\end{figure}
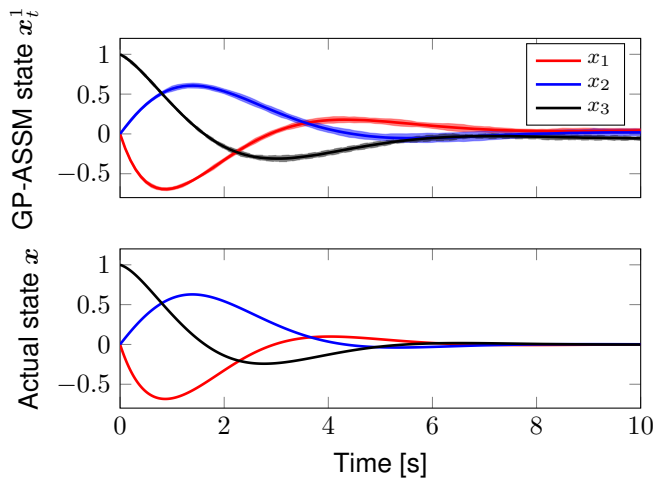
\begin{figure}
	\begin{center}
		\tikzsetnextfilename{case_study2_time}
		\vspace{-0.15cm}
		\begin{tikzpicture}
\begin{axis}[
name=plot1,
  ylabel={Time / step [s]},
  legend pos=north west,
       width=0.95\columnwidth,
  height=3.2cm,
  ymin=0,
  ymax=0.9,
  xmin=0,
  xmax=1000,
    font={\sffamily},
  legend style={font=\footnotesize\sffamily},
  legend cell align={left},
  xticklabels={,,}]
\addplot[color=red,line width=1pt] table [x index=0,y index=1]{data/case_study2_time.dat};
\addplot[color=blue,line width=1pt] table [x index=0,y index=2]{data/case_study2_time.dat};
\legend{GP-SSM, GP-ASSM}
\end{axis}
\begin{axis}[
name=plot2,
    at=(plot1.below south west), anchor=above north west,
  xlabel={Time step $t$},
  ylabel={Total time [s]},
  legend pos=north west,
       width=0.95\columnwidth,
  height=3.2cm,
  ymin=0,
  ymax=400,
  xmin=0,
  xmax=1000,
    font={\sffamily},
  legend style={font=\footnotesize\sffamily},
  legend cell align={left}]
\addplot[color=red,line width=1pt] table [x index=0,y index=3]{data/case_study2_time.dat};
\addplot[color=blue,line width=1pt] table [x index=0,y index=4]{data/case_study2_time.dat};
%\legend{GP-SSM, GP-ASSM $\overline{m}=1$}
\end{axis}
\end{tikzpicture} 
		\vspace{-0.5cm}\caption{Comparison of computational time between a standard GP-SSM (red) and the proposed GP-ASSM (blue). Top: The computational time per time step is constant for GP-ASSM in contrast to a cubic increase using GP-SSMs. Bottom: Total computational time of a sample scales linear for GP-ASSMs.}\vspace{-0.5cm}
		\label{fig:case_study2_time}
	\end{center}
\end{figure}
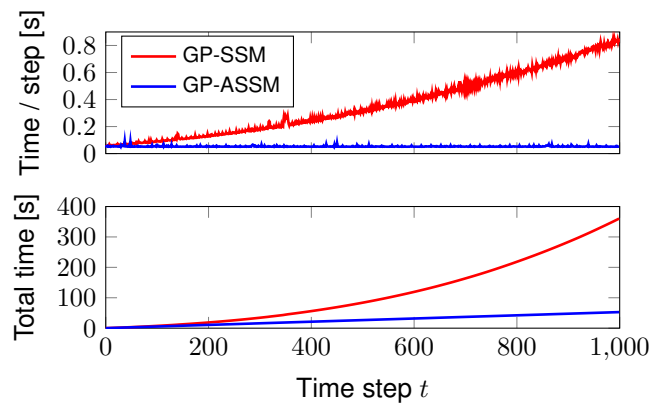
%%%%%%%%%%%%%%%%%%%%%%%%%%%%%%%%%%%%%%%%%%%%%%%%%%%%%%%%%%%%%%%%%%%%%%%%%%%%%%%%
\subsection{Discussion}
In the previous sections, we show that the sampling of GPDMs, avoiding the impossible sampling of infinite dimensional objects, leads to non-Markovian dynamics. This characteristic is surprising as the representation of the GP-SSM and GP-NOE model, given by~\cref{for:gp_ssm,for:gp_noe} respectively, is based on a Markovian state space structure. However, the covariance term of the GP introduces dependencies across the states that leads to dependencies across time for GPDMs. Thus, the sampling of GP-SSMs and GP-NOE models generates non-Markovian dynamics, which we analyze from a control theoretical point of view. More precisely, a general description for approximated GPDMs based on a finite number of included past states/outputs is presented and compared against the true sampling. The approximation error of these models is analyzed with respect to the Kullback-Leibler divergence, the mean square prediction error and the variance of the prediction. Furthermore, we prove that the true variance of the next state ahead is always less than the variance of the approximated model as illustrated in~\cref{fig:case_study_1}. This is relevant for the usage of the approximation in variance based control approaches such as risk-sensitive control approaches, e.g.,~\cite{medina:icra2013b,likar2007predictive}. Additionally, the boundedness of GPDMs with bounded mean and variance functions, such as the commonly used squared exponential function, is proven and visualized in~\cref{fig:case_study_2}. The boundedness is an important property for the identification of unknown systems with GPDMs and is likewise exploited for robustness analysis in GPDM based control approaches, see~\cite{freeman1998robustness}. The introduced characteristics about the relation between the boundedness of the true sampling and the approximations allows a safe usage of the approximation. Finally, the approximated models allow not only a significant reduction of the total computing time as shown is~\cref{fig:case_study2_time} but also a constant computing time per step which enables the usages in real-time environments.
%%%%%%%%%%%%%%%%%%%%%%%%%%%%%%%%%%%%%%%%%%%%%%%%%%%%%%%%%%%%%%%%%%%%%%%%%%%%%%%%
\section*{Conclusion}
In this article, we show that the sampling procedure for Gaussian process dynamical models leads to non-Markovian dynamics. We present a holistic description for approximated models which fulfills the Markov condition. The approximation error of these models is analyzed in respect to the Kullback-Leibler divergence, the mean square prediction error and the variance of the prediction. Furthermore, the boundedness of Gaussian process state space models and nonlinear output error models is qualitatively and quantitatively proven. We proof that the non-Markovian as well as the Markovian approximation is always bounded under specific conditions. Finally, we show the relation between different approximations with respect to the boundedness property of the system. Examples visualize the outcome and highlight the relevance of the results for data-driven based control approaches.
%\addtolength{\textheight}{-15.5cm}% This command serves to balance the column lengths
                                  % on the last page of the document manually. It shortens
                                  % the textheight of the last page by a suitable amount.
                                  % This command does not take effect until the next page
                                  % so it should come on the page before the last. Make
                                  % sure that you do not shorten the textheight too much.
%%%%%%%%%%%%%%%%%%%%%%%%%%%%%%%%%%%%%%%%%%%%%%%%%%%%%%%%%%%%%%%%%%%%%%%%%%%%%%%%
%\balance
%\appendices
\section*{Acknowledgment}
This work was supported by the European Research Council (ERC) Consolidator Grant “Safe data-driven control for human-centric systems (COMAN)” agreement \#864686.

\ifCLASSOPTIONcaptionsoff
  \newpage
\fi

\bibliographystyle{IEEEtran}
\bibliography{mybib} 

% biography section
% 
% If you have an EPS/PDF photo (graphicx package needed) extra braces are
% needed around the contents of the optional argument to biography to prevent
% the LaTeX parser from getting confused when it sees the complicated
% \includegraphics command within an optional argument. (You could create
% your own custom macro containing the \includegraphics command to make things
% simpler here.)
\vspace{-0.4cm}
\begin{IEEEbiography}[{\includegraphics[width=1in,height=1.25in,clip,keepaspectratio]{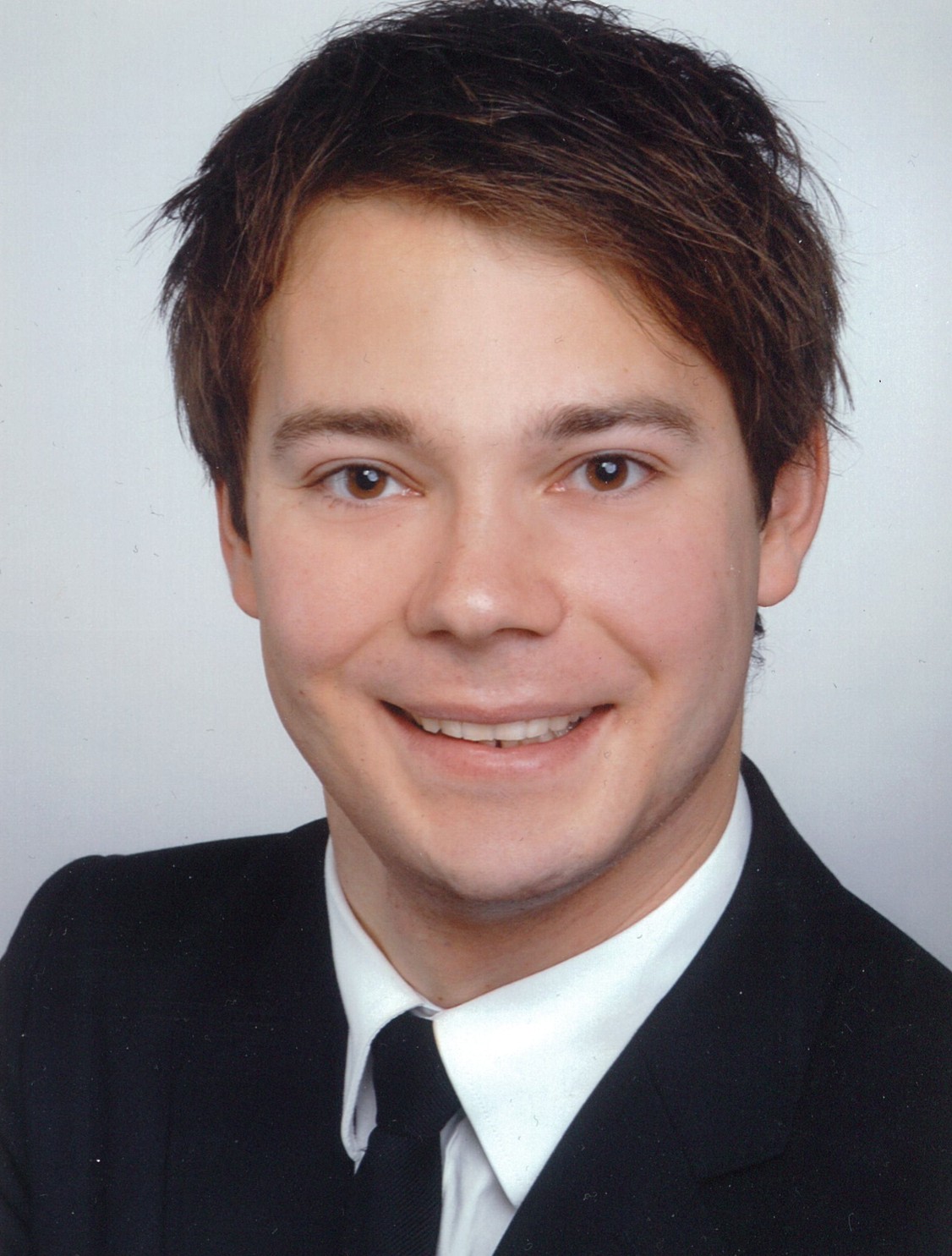}}]{Thomas Beckers}
Thomas Beckers is a postdoctoral researcher at the Department of Electrical and Systems Engineering, University of Pennsylvania. In 2020, he successfully defended his PhD thesis at the Technical University of Munich (TUM), Germany. He received the B.Sc. and M.Sc. degree in electrical engineering in 2010 and 2013, respectively, from the Technical University of Braunschweig, Germany. In 2018, he was a visiting researcher at the University of California, Berkeley. His research interests include data-driven based identification and control, nonparametric systems, and formal methods for safe learning.
\end{IEEEbiography}
\vspace{-0.4cm}
\begin{IEEEbiography}[{\includegraphics[width=1in,height=1.25in,clip,keepaspectratio]{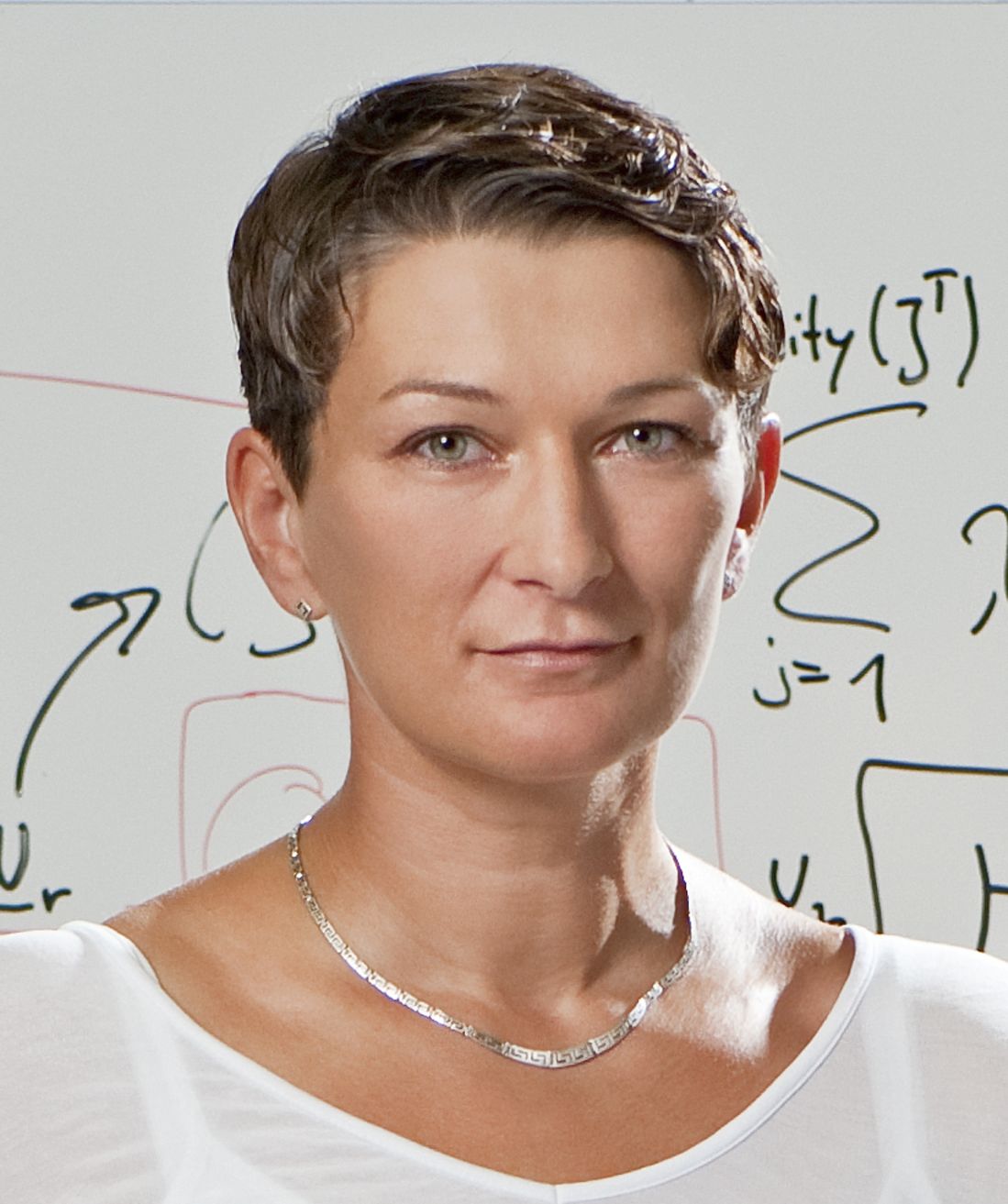}}]{Sandra Hirche}
Sandra Hirche received the Diplom-Ingenieur degree in aeronautical engineering from Technical University Berlin, Germany, in 2002 and the Doktor-Ingenieur degree in electrical engineering from Technical University Munich, Germany, in 2005. From 2005 to 2007 she was awarded a Postdoc scholarship from the Japanese Society for the Promotion of Science at the Fujita Laboratory, Tokyo Institute of Technology, Tokyo, Japan. From 2008 to 2012 she has been an associate professor at Technical University Munich. Since 2013 she is TUM Liesel Beckmann Distinguished Professor and has the Chair of Information-oriented Control in the Department of Electrical and Computer Engineering at Technical University Munich. Her main research interests include cooperative and distributed networked control as well as learning control with applications in human-robot interaction, multi-robot systems, and general robotics. She has published more than 150 papers in international journals, books, and refereed conferences.
\end{IEEEbiography}
% or if you just want to reserve a space for a photo:

%\begin{IEEEbiography}{Michael Shell}
%Biography text here.
%\end{IEEEbiography}

% if you will not have a photo at all:
%\begin{IEEEbiographynophoto}{John Doe}
%Biography text here.
%\end{IEEEbiographynophoto}

% insert where needed to balance the two columns on the last page with
% biographies
%\newpage

%\begin{IEEEbiographynophoto}{Jane Doe}
%Biography text here.
%\end{IEEEbiographynophoto}

% You can push biographies down or up by placing
% a \vfill before or after them. The appropriate
% use of \vfill depends on what kind of text is
% on the last page and whether or not the columns
% are being equalized.

%\vfill

% Can be used to pull up biographies so that the bottom of the last one
% is flush with the other column.
%\enlargethispage{-5in}
\end{document}